\renewcommand\footnotetextcopyrightpermission[1]{} % removes footnote with conference information in first column
\renewcommand\footnotetextcopyrightpermission[1]{}
\renewcommand\@formatdoi[1]{\ignorespaces}
\def\cameraReady{} % set to true
\begin{document}

% ====== comments ======

\ifdefined\cameraReady
\renewcommand{\todo}[2][]{}
\fi

\newcommand{\textcomment}[2]{\textbf{#1:} #2}
\newcommand\rati[1]{\todo[color=yellow,inline]{\textcomment{Rati}{#1}}}
\newcommand\sasha[1]{\todo[color=red,inline]{\textcomment{Sasha}{#1}}}
\newcommand\zekun[1]{\todo[color=brown,inline]{\textcomment{Zekun}{#1}}}
\newcommand\balaji[1]{\todo[color=purple,inline]{\textcomment{Balaji}{#1}}}

\newcommand\com[1]{}

% ====== systems ======
\newcommand\sysname{Shoal\xspace}

\newcommand\chainspace{Chainspace\xspace}
\newcommand\ethereum{Ethereum\xspace}
\newcommand\hyperledger{Hyperledger\xspace}
\newcommand\omniledger{Omniledger\xspace}
\newcommand\rapidchain{RapidChain\xspace}
\newcommand\coconut{Coconut\xspace}
\newcommand\bft{BFT\xspace}
\newcommand\rscoin{RSCoin\xspace}
\newcommand\bftsmart{\textsc{bft-SMaRt}\xspace}
\newcommand\byzcuit{Byzcuit\xspace}
\newcommand\bitcoin{Bitcoin\xspace}
\newcommand\bitcoinng{Bitcoin-NG\xspace}
\newcommand\cosi{CoSi\xspace}
\newcommand\byzcoin{ByzCoin\xspace}
\newcommand\elastico{Elastico\xspace}
\newcommand\algorand{Algorand\xspace}
\newcommand\hyperledgerfabric{Hyperledger Fabric\xspace}
\newcommand\pbft{PBFT\xspace}
\newcommand\bftlong{Byzantine Fault-Tolerant\xspace} 
\newcommand\solidus{Solidus\xspace}
\newcommand\hashgraph{Hashgraph\xspace}
\newcommand\avalanche{Avalanche\xspace}
\newcommand\blockmania{Blockmania\xspace}
\newcommand\stellar{Stellar\xspace}
\newcommand\libra{Libra\xspace}
\newcommand\librabft{LibraBFT\xspace}
\newcommand\move{Move\xspace}
\newcommand\hotstuff{HotStuff\xspace}
\newcommand\vanillahs{Vanilla-HotStuff\xspace}
\newcommand\batchedhs{Batched-HotStuff\xspace}

%  ===== custom notations ======
\newcommand{\keyword}[1]{\normalfont \texttt{#1}}
\newcommand{\accounts}{\keyword{accounts}}

\newcommand{\transfer}{O}
\newcommand{\cert}{C}
\newcommand{\sync}{S}
\newcommand{\account}{a}
\newcommand{\authority}{\alpha}

%  ===== formatting ======
\newcommand{\para}[1]{\vspace{2mm}\noindent\textbf{#1}.\xspace}

% Abbreviations
\newcommand{\cf}{cf.\@\xspace}
\newcommand{\vs}{vs.\@\xspace}
\newcommand{\etc}{etc.\@\xspace}
\newcommand{\ala}{ala\@\xspace}
\newcommand{\wrt}{w.r.t.\@\xspace}
\newcommand{\etal}{\textit{et al.}\@\xspace}
\newcommand{\eg}{\textit{e.g.}\@\xspace}
\newcommand{\ie}{\textit{i.e.}\@\xspace}
\newcommand{\via}{\textit{via}\@\xspace}
\newcommand{\defacto}{\textit{de facto}\@\xspace}

% Theorems
\newtheorem{assumption}{Security Assumption}
\newtheorem{property}{Property}
%\newtheorem{proposition}{Proposition}
%\newtheorem{lemma}{Lemma}

% For inline section titles
\newcommand\inlinesection[1]{{\bf #1.}}

\def\first{({i})\xspace}
\def\second{({ii})\xspace}
\def\third{({iii})\xspace}
\def\fourth{({iv})\xspace}
\def\fifth{({v})\xspace}
\def\sixth{({vi})\xspace}

\newcommand{\one}{({i})\xspace}
\newcommand{\two}{({ii})\xspace}
\newcommand{\three}{({iii})\xspace}
\newcommand{\four}{({iv})\xspace}
\newcommand{\five}{({v})\xspace}
\newcommand{\six}{({vi})\xspace}

% Colors
\definecolor{verylightgray}{gray}{0.9}

% Markers
\newcommand\vgap{\vskip 2ex}
\newcommand\marker{\vgap\ding{118}\xspace}
\def\na{--}
\def\unsure{?}
\def\missing{$!$}
\newcommand{\yes}{\ding{51}}
\newcommand{\no}{\ding{55}}
\DeclareRobustCommand\pie[1]{
\tikz[every node/.style={inner sep=0,outer sep=0, scale=1.5}]{
\node[minimum size=1.5ex] at (0,-1.5ex) {}; 
\draw[fill=white] (0,-1.5ex) circle (0.75ex); \draw[fill=black] (0.75ex,-1.5ex) arc (0:#1:0.75ex); 
}}
\def\L{\pie{0}} % Low
\def\M{\pie{-180}} % Medium
\def\H{\pie{360}} % High

% Listings

%Algorithmicx
\algdef{SE}[Receiving]{Receiving}{EndReceiving}[1]{\textbf{upon
receiving}\ #1\ \algorithmicdo}{\algorithmicend\ \textbf{}}%
\algtext*{EndReceiving}
\newcommand\StateX{\Statex\hspace{\algorithmicindent}}
\algrenewcommand\textproc{}% Used to be \textsc

\graphicspath{{figures/}}

\date{}

\title{
%\sysname: Reducing DAG-BFT Latency With Pipelining, Leader Reputation and Prevalent Responsiveness
%\sysname: Reducing DAG-BFT Latency 
\sysname: Improving DAG-BFT Latency And Robustness  
}

\ifdefined\cameraReady

\author{Alexander Spiegelman}
%\email{sasha@aptoslabs.com}
\affiliation{
    \institution{Aptos}
    \country{}
    }
%\country{United States}

\author{Balaji Arun}
\affiliation{
    \institution{Aptos}
    \country{}
    }
%\country{United States}

\author{Rati Gelashvili}
\affiliation{
    \institution{Aptos}
    \country{}
    }
%\country{United States}

\author{Zekun Li}
\affiliation{
    \institution{Aptos}
    \country{}
    }
%\country{United States}

\else
\author{}
\fi

\settopmatter{printfolios=false}
\settopmatter{printacmref=true}

\begin{abstract}
The Narwhal system is a state-of-the-art Byzantine fault-tolerant scalable architecture that involves constructing a directed acyclic graph (DAG) of messages among a set of validators in a Blockchain network.
Bullshark is a zero-overhead consensus protocol on top of the Narwhal's DAG that can order over 100k transactions per second.
Unfortunately, the high throughput of Bullshark comes with a latency price due to the DAG construction, increasing the latency compared to the state-of-the-art leader-based BFT consensus protocols.

We introduce \sysname, a protocol-agnostic framework for enhancing Narwhal-based consensus. 
By incorporating leader reputation and pipelining support for the first time, \sysname significantly reduces latency.
Moreover, the combination of properties of the DAG construction and the leader reputation mechanism enables the elimination of timeouts in all but extremely uncommon scenarios in practice, a property we name ``prevalent responsiveness" (it strictly subsumes the established and often desired ``optimistic responsiveness" property for BFT protocols). 

We integrated \sysname instantiated with Bullshark, the fastest existing Narwhal-based consensus protocol, in an open-source Blockchain project and provide experimental evaluations demonstrating up to 40\% latency reduction in the failure-free executions, and up-to 80\% reduction in executions with failures against the vanilla Bullshark implementation.
\end{abstract}

\begin{CCSXML}
<ccs2012>
   <concept>
       <concept_id>10002978.10003006.10003013</concept_id>
       <concept_desc>Security and privacy~Distributed systems security</concept_desc>
       <concept_significance>500</concept_significance>
       </concept>
 </ccs2012>
\end{CCSXML}

\ccsdesc[500]{Security and privacy~Distributed systems security}

\keywords{
Consensus Protocol, Byzantine Fault Tolerance
}

\maketitle

  %  \sasha{set submission ID}

\section{Introduction} 
\label{sec:introduction}

%Change the story like in the blog post.
%Say that Bullshark is great for TP but increses The latency of Jolteon by 50%.
%We bring it down to the same level!

Byzantine fault tolerant (BFT) systems, including consensus protocols~\cite{zyzzyva, castro2002practical, 700bft, sbft} and state machine replication~\cite{bessani2014state, hotstuff, diembft, buchman2018latest, kapitza2012cheapbft}, have been a topic of research for over four decades as a means of constructing reliable distributed systems.
Recently, the advent of Blockchains has underscored the significance of high performance.
While Bitcoin handles approximately 10 transactions per second (TPS), the proof-of-stake committee-based blockchains~\cite{solana, aptos, ava, sui, celo, definity} are now engaged in a race to deliver a scalable BFT system with the utmost throughput and minimal latency.

Historically, the prevailing belief has been that reducing communication complexity was the key to unlocking high performance, leading to the pursuit of protocols with linear communication. However, this did not result in drastic enough improvements in the throughput, falling significantly short of the current blockchain network targets. For example, the state-of-the-art Hotstuff~\cite{hotstuff} protocol in this line of work only achieves a throughput of 3500 TPS~\cite{bottlenecks}.

A recent breakthrough, however, stemmed from the realization that data dissemination is the primary bottleneck for leader-based protocols, and it can benefit from parallelization~\cite{narwhaltusk, MirBFT, dispersedledger, arun2022dqbft}. The Narwhal system~\cite{narwhaltusk} separated data dissemination from the core consensus logic and proposed an architecture where all validators simultaneously disseminate data, while the consensus component orders a smaller amount of metadata. A notable advantage of this architecture is that not only it delivers impressive throughput on a single machine, but also naturally supports scaling out each blockchain validator by adding more machines. The Narwhal paper~\cite{narwhaltusk} evaluated the system in a geo-replicated environment with 50 validators and reported a throughput of 160,000 TPS with one machine per validator, which further increased to 600,000 TPS with 10 machines per validator. 

These numbers are more in line with the ambitions of modern blockchain systems.  Consequently, Narwhal has garnered significant traction within the community, resulting in its deployment in Sui~\cite{sui} and ongoing development in Aptos~\cite{aptos} and Celo~\cite{celo}.

Developing a production-ready reliable distributed system is challenging, and integrating intricate consensus protocols only adds to the difficulty. 
Narwhal addresses this issue by abstracting away networking from the consensus protocol. It constructs a non-equivocating round-based directed acyclic graph (DAG), a concept initially introduced by Aleph~\cite{aleph}. In this design, each validator contributes one vertex per round, and each vertex links to $n-f$ vertices in the preceding round.
Each vertex is disseminated via an efficient reliable broadcast implementation, ensuring that malicious validators cannot distribute different vertices to different validators within the same round.
With networking abstraction separated from the details of consensus, the DAG can be constructed without contending with complex mechanisms like view-change or view-synchronization.

During periods of network asynchrony, each validator may observe a slightly different portion of the DAG at any given time. However, the structure facilitates a simpler ordering mechanism compared to monolithic BFT protocols. In DAG-based consensus protocols, vertices represent proposals, edges represent votes, and the concept of quorum intersection guarantees that validators can consistently order all DAG vertices. This provides efficient consensus because ordering is done via local computation only, without any additional communication cost.

 %enabling the consensus process to progress efficiently and reliably.

% Additionally, the system achieves perfect load balancing. 
% load balancing and garbage collection, is it needed?

% bind into simplified ordering logic.

%This structure offers two significant advantages in practice: (1) straightforward garbage collection and (2) simplified ordering logic. For the latter, 
\paragraph{Narwhal-based consensus protocols}
As discussed, the idea shared by Narwhal-based consensus protocols is to interpret the DAG structure as the consensus logic~\cite{allyouneed, narwhaltusk, bullshark, bullsharksync}, but they differ in the networking assumptions and the number of rounds required for vertex ordering.
However, all three protocols share a common structure.
Prior to the protocol initiation, there is an a-priori mapping from specific rounds to leaders shared among all validators.
In the asynchronous protocols (DAG-Rider and Tusk), this mapping to the sequence of leaders is hidden behind threshold cryptography and revealed throughout the protocol.
%In the partially synchronous Bullshark the sequence is global and known to all.
We use the term \emph{anchor} to refer to the vertex associated with the round leader in each relevant round. 
%It is important to note that, in these protocols, not every round includes an anchor. This has implications for latency, as we will explain shortly.

The DAG local ordering process by each validator is divided into two phases. First, each validator determines which anchors to order (the rest are skipped). Then, the validators sequentially traverse the ordered anchors, deterministically ordering all DAG vertices contained within the causal histories of the respective anchors. The primary considerations that affect the protocol latency are as follows
\begin{enumerate}
    \item Bad leaders. When a validator is malicious or not fast enough, its vertex may not be included in the DAG. 
    In the case of leaders, the absence of anchors affects the ordering latency of all vertices in previous rounds that are not already ordered.
    These vertices can only be ordered as a part of a causal history of a future anchor, directly impacting their latency.

    \item Sparse anchors. In Narwhal-based consensus protocols, not every round includes an anchor. Consequently, vertices located farther from the next anchor must wait for additional rounds before they can be ordered.
    
    %This has implications for latency, as we will explain shortly.

    %As mentioned above, the ordering logic for anchors does not allow for an anchor in every round. Consequently, vertices located farther from the next anchor must wait for additional rounds before they can be ordered.  

\end{enumerate}

\paragraph{\sysname framework}
This paper presents \sysname: a framework addressing the aforementioned challenges by incorporating leader reputation and pipelining mechanisms into all Narwhal-based consensus protocols.
So far, all available open-source implementations of Narwhal and Bullshark, including Meta~\footnote{https://github.com/facebookresearch/narwhal/blob/main/consensus/src/lib.rs}, and the production deployment on Sui~\footnote{https://github.com/MystenLabs/sui/blob/main/narwhal/consensus/src/bullshark} lack these features, while our evaluations demonstrate they can provide significant performance improvements.

Leader reputation is an often overlooked concept in theoretical research, yet it holds crucial importance for practical performance. 
In practice, Byzantine failures are rare due to robust protection and economic incentives for validators to adhere to the protocol. (Moreover, Narwhal-based DAG constructions, which provide non-equivocation, significantly reduce the range of potential Byzantine behavior).  
Thus, the most common failure scenarios in Blockchain (esp. in Narwhal-based) systems involve validators who struggle to keep up, which can occur due to temporary crashes, slower hardware, or geographical distance. If unresponsive validators repeatedly become leaders, progress is inevitably impeded and degrades system performance. The leader reputation schemes select leaders based on the history of their recent activity, as introduced in Diem~\cite{diembft} and later formalized in~\cite{cohen2022aware}. 

In the context of Narwhal-based consensus, pipelining means having an anchor in every round, which would result in improved latency for non-anchor vertices. 

\paragraph{The main challenge}
While the ability to order the DAG locally, without extra communication contributes to the scalability of Narwhal-based consensus, it poses a significant challenge to supporting leader reputation and pipelining. 

The leader reputation problem is simpler to solve for monolithic BFT consensus protocols. While the validators may disagree on the history that determines the next leader's identity, the worst that can happen is a temporary loss of liveness until view synchronization, i.e. the quorum of validators can eventually recover by agreeing on a fall-back leader.
This exact method was utilized in~\cite{cohen2022aware}, electing the fall-back leaders by a simple round-robin.

In contrast, when all communication is done upfront for building the DAG, the safety of a consensus protocol relies on a key property of the local computation that all validators will decide to order the same set of anchors.
This must hold despite the local views of the DAG possibly differing among the validators across multiple rounds.
Hence, selecting the round leaders dynamically based on reputation (as opposed to the a-priori mapping) seems impossible due to a circular dependency: we need to agree on mapping to solve consensus, but we need consensus to agree on a new mapping.

For pipelining, even if all validators agree on the mapping, they also must agree on whether to order or skip each anchor.
Our attempts to solve the problem by delving into the inner workings of the protocol and exploring complex quorum intersection ordering rules have not been fruitful. 
Intuitively, this is because consensus requires a voting round after each anchor proposal and the next anchor should link to the decisions (votes) on the previous one.     

\paragraph{Our solution.}
In \sysname, we lean into the power of performing computations on the DAG, in particular the ability to preserve and re-interpret information from previous rounds. For leader reputation, this allows bootstrapping the seemingly circular dependency on consensus, while for pipelining, it allows combining multiple instances of the protocol in a suitable manner.
In fact, \sysname runs multiple instances of the protocol one after the other, where the trick is to agree on the switching point based on the following observation:\\

\textit{For any Narwhal-based consensus protocol, since all validators agree on which anchors to order vs skip, they in particular agree on the first ordered anchor.}\\

With this observation in mind, 
%we can easily switch between the instances of the protocol.
each validator can start locally interpreting its view of the DAG by running an instance of its favorite protocol until it determines the first ordered anchor.
Since validators agree on this anchor, they can all deterministically start a new protocol instance in the following round.
Note that this too, happens locally, from a validator's perspective, as a part of re-interpreting the DAG.
As a result, \sysname ensures the following
\begin{enumerate}
    \item Leader reputation: validators select new anchors for future rounds based on the information available in the causal history of the ordered anchors. 
    %That is, avoiding validators that did not previously perform from serving as leaders.

    \item Pipelining: allocate an anchor in the first round of the new instance. That way, if the first anchor in every instance is ordered, we get an anchor in every round, providing the pipelining effect.
\end{enumerate}

\paragraph{Our system and prevalent responsiveness}
We implemented \sysname in the open-source codebase of one of the live Blockchain networks and instantiated it with the partially synchronous version of Bullshark\footnote{\sysname of bull sharks.}. 
In this setting, we also discovered a way to eliminate timeouts in all except extremely rare scenarios, a property we refer to as prevalent responsiveness. 
The design with prevalent responsiveness demonstrates further performance improvements in our evaluations.
Added motivation to avoid timeouts in as many situations as possible comes from a purely practical point of view, as (1) when timeouts are common, the duration affects the system performance, but in a way that is non-trivial to configure in an optimal way as it is highly environmentally (network) dependent; and (2) timeout handling is known to add significant complexity to the implementation logic for managing potential state space of validators.

Monolithic leader-based BFT protocols use timeouts to trigger protocol progress every time a leader is faulty or slow, while optimistic responsiveness property, popularized by the HotStuff~\cite{hotstuff} protocol, effectively eliminates timeout implications in ideal scenarios when the network is synchronous and there are no failures.
However, when failures do occur, all validators must still wait until the timeout expires before transitioning to the next leader.

Utilizing the inherent properties of the DAG construction, and leader reputation mechanism, we ensure that \sysname makes progress at network speed under a much larger set of scenarios than optimistically responsive protocols would, which makes \sysname with partially synchronous Bullshark prevalently responsive. In \sysname, validators do wait for timeouts when a few leaders crash and the corresponding anchors are not ordered. 
While the FLP~\cite{flp} impossibility result dictates that there has to be a scenario that requires a timeout, \sysname design aligns this FLP scenario to be extremely improbably in practice (multiple, e.g., 10 consecutive skipped anchors). Conceptually, this is similar to how randomized protocols align FLP scenarios to have $0$ probability in solving asynchronous consensus with probability $1$~\cite{ben1983another}.

All available Bullshark implementations use timeouts to ensure honest validators wait for slow anchors even if $2f+1$ other vertices were already delivered.   
By eliminating timeouts, \sysname immediately reduces latency when a leader is faulty, as the corresponding anchors would never be delivered and it is best to advance to the next round as fast as possible.
If the leader is not crashed and just slower, validators may skip anchors that they could order if they waited a little bit longer.
This is however, where the leader reputation mechanism of \sysname shines, filtering out slow validators that constantly delay new rounds and allowing the DAG to proceed at network speed while ordering most anchors.

Our experimental evaluation demonstrates up to 40\% reduction in latency against vanilla Bullshark protocol implementation when there are no failures in the system, and up to 80\% reduction in latency when there are failures. We provide experiments specifically designed to give insights into the impact of the improvements separately, i.e. pipelining, leader reputation and eliminating the timeouts (prevalent responsiveness).

In summary, the paper focuses on improving latency and robustness in DAG-Based protocols.
It provides \sysname, a framework to enhance any Narwhal-based consensus protocol with (1) Leader reputation mechanism that prevents slow, isolated, or crashed validators from becoming leaders, (2) pipelining support that ensures every round on the DAG has an anchor, and (3) eliminating timeouts in many cases further reducing the latency,
%and improving the robustness of the system.

The remaining sections of the paper are organized as follows:
Section~\ref{sec:dagbft} provides background information on DAG-BFT and highlights the main property utilized in this paper.
Section~\ref{sec:pipelining} introduces our pipelining approach, while Section~\ref{sec:leaderreputation} presents the leader reputation solution in \sysname.
In Section~\ref{sec:security}, we prove correctness of the proposed framework.
Section~\ref{sec:implementation} describes the implementation details and discusses timeouts. Section~\ref{sec:evaluation} presents the results of our evaluation.
Section~\ref{sec:related} discusses related work, and finally, Section~\ref{sec:conclusion} concludes the paper.

\section{DAG BFT} \label{sec:dagbft}
%DAG-based BFT consensus protocols are gaining increasing popularity in Blockchain networks. In fact, Narwhal~\cite{} (Best paper award in EuroSys'22) and Bullshark are already deployed in Sui network~\cite{} and are currently being implemented by Aptos network~\cite{} and Celo~\cite{}.
%\sasha{The above is in the intro now. Maybe we can skip it here.}

%As mentioned in the Introduction, Narwhal offers a highly scalable and efficient DAG construction. DAG-rider~\cite{}, Tusk~\cite{}, and Bullshark~\cite{} are all local consensus algorithms that totally order the nodes of the DAG without incurring any communication overhead. Together, they can achieve an ordering throughput of over 100k TPS in a geo-replicated environment~\cite{Narwhal, Bullshark}.

We start by providing the necessary background on Narwhal-based BFT consensus (Section \ref{sub:background}) and define a common property (Section \ref{sub:framework}) satisfied by such consensus protocols. We rely on this property while designing \sysname to enhance a given baseline protocol with pipelining and leader reputation, thereby reducing latency.

\subsection{Background}
\label{sub:background}
The concept of DAG-based BFT consensus, initially introduced by HashGraph~\cite{hashgraph}, aims to decouple the network communication layer from the consensus logic. In this approach, each message consists of a collection of transactions and references to previous messages. These messages collectively form an ever-growing DAG, with messaging serving as vertices and references between messages serving as edges.
%Due to the asynchronous nature of the network, different validators may see slightly different local views of the DAG at any given time.

%\subsubsection{Narwhal's DAG}
In Narwhal, the DAG is round-based, similar to Aleph~\cite{aleph}. In this approach, each vertex within the DAG is associated with a round number. 
In order to progress to round $r$, a validator must first obtain $n-f$ vertices (from distinct validators) belonging to round $r-1$.
Every validator can broadcast one vertex per round, with each vertex referencing a minimum of $n-f$ vertices from the previous round.

The \emph{causal history} of a vertex $\mathsf{v}$ refers to the sub-graph that starts from $\mathsf{v}$. Figure~\ref{fig:dag} illustrates a validator's local view of a round-based DAG.

\begin{figure}
    \centering
    \includegraphics[width=0.48\textwidth]{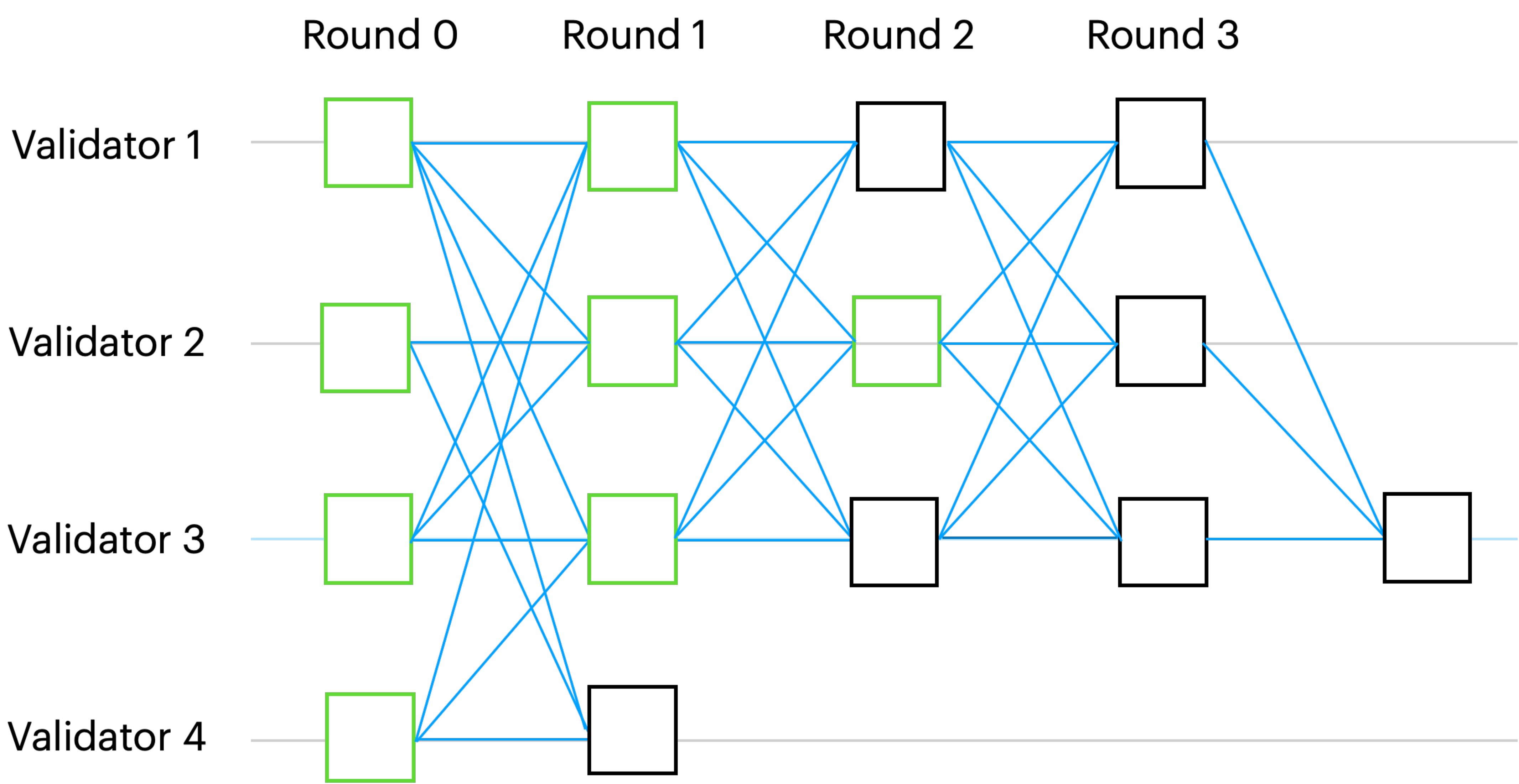}
    \caption{A possible local view of a round-based DAG. The causal history of the vertex identified by validator 2 in round 2 is highlighted in green.}
    
    \label{fig:dag}
\end{figure}

To disseminate messages, Narwhal uses an efficient reliable broadcast implementation that guarantees:
\begin{description}

\item \textbf{Validity:} if an honest validator has a vertex $\mathsf{v}$ in its local view of the DAG, then it also has all the causal history of $\mathsf{v}$.

\item \textbf{Eventual delivery:} if an honest validator has a vertex in round $\mathsf{r}$ by validator $\mathsf{p}$ in its local view of the DAG, then eventually all honest validators have a vertex in round $\mathsf{r}$ by validator $\mathsf{p}$ in their local views of the DAG. 

\item \textbf{Non-equivocation:} if two honest validators have a vertex in round $\mathsf{r}$ by validator $\mathsf{p}$ in their local views of the DAG, then the vertices are identical.
\end{description}

\noindent Inductively applying Validity and Non-equivocation, we get:

\begin{description}

\item \textbf{Completeness:} if two honest validators have a vertex $\mathsf{v}$ in round $\mathsf{r}$ by validator $\mathsf{p}$ in their local views of the DAG, then $\mathsf{v}$'s causal histories are identical in both validators' local view of the DAG.

\end{description}

In simple words, Narwhal construction guarantees that
\begin{enumerate}
    \item All validators eventually see the same DAG; and
    \item Any two validators that have the same vertex $v$ locally also agree on the whole causal history of $v$ (the contents of vertices and edges between them).
\end{enumerate}

\paragraph{DAG-Rider / Tusk / Bullshark}
DAG-Rider, Tusk, and Bullshark are all algorithms to agree on the total order of all vertices in the DAG with no additional communication overhead.
Each validator independently looks at its local view of the DAG and orders the vertices without sending a single message. This is done by interpreting the structure of the DAG as a consensus protocol, where a vertex represents a proposal and an edge represents a vote.

DAG-Rider~\cite{allyouneed} and Tusk~\cite{narwhaltusk} are randomized protocols designed to tolerate full asynchrony, which necessitates a larger number of rounds and consequently, a higher latency. Bullshark~\cite{bullsharksync} also provides a deterministic protocol variant with a faster ordering rule, relying on partial synchrony for liveness.
%Perhaps not surprisingly, all three protocols use quorum intersections in the DAG structure to guarantee safety.
While the specific details are not required to understand this paper, next we explain the high-level structure of these protocols and define a property they all share. 

\subsection{Common framework}
\label{sub:framework}
Narwhal-based consensus protocols have the following common abstract structure:
\begin{enumerate}
    \item Pre-determined anchors. Every few rounds (the number depends on the protocol) there is a round with a pre-determined leader. The vertex of the leader is called an \emph{anchor}. In the partially synchronous version of Bullshark, the leaders are a-priori known. In the asynchronous protocols (DAG-Rider, Tusk, asynchronous Bullshark) the leaders are hidden and revealed during the DAG construction.
    \item Order the anchors. All validators independently decide which anchors to skip and which to order. The details differ among the protocols, although they all rely on quorum intersection in the DAG structure. The key aspect is that each honest validator locally decides on a list of anchors, and all lists share the same prefix.
    \item Order causal histories. Validators process their list of ordered anchors one by one, and for each anchor order all previously unordered vertices in their causal history by some deterministic rule. By Completeness, all validators see the same causal history for any anchor, so all validators agree on the total order. 
\end{enumerate}

An illustration of the ordering logic appears in Figure~\ref{fig:ordering}.

\begin{figure}
    \centering
    \includegraphics[width=0.48\textwidth]{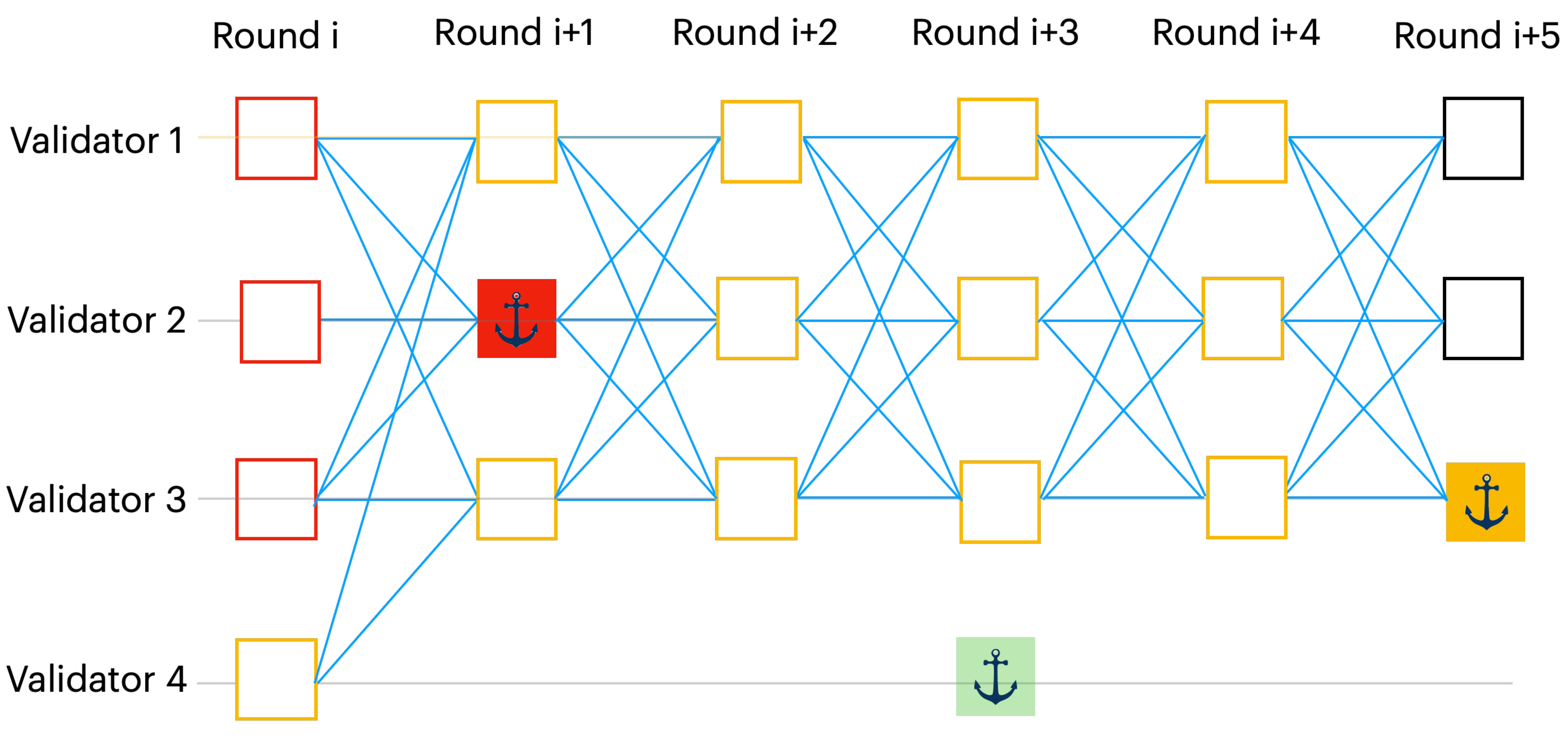}
    \caption{A possible local view of the DAG in the partially synchronous Bullshark protocol. Filled squares represent the pre-defined anchors. In this example, the validator orders the red and yellow anchors, while the green (which is not in the DAG) anchor is skipped. To order the DAG, the validator deterministically orders the red anchor's causal history (the unfilled red vertices) and immediately after the yellow anchor's causal history (the unfilled yellow vertices). 
    %\sasha{change to anchor image? Yes.}
    }
    
    \label{fig:ordering}
\end{figure}

%As mentioned above and illustrated in Figure~\ref{fig:ordering}, 
The key correctness argument for all the above mention consensus protocols relies on the fact that all validators agree on which anchors to order and which to skip. In particular, they will all agree on the first anchor that no validator skips. More formally, the abstract property of the Narwhal-based consensus protocols that our \sysname framework relies on is the following:

\begin{property}
\label{pro:firstanchor}
    Given a Narwhal-based protocol $\mathcal{P}$, if all honest validators agree on the mapping from rounds to leaders before the beginning of an instance of $\mathcal{P}$, then they will agree on the first anchor each of them orders during the execution of $\mathcal{P}$.
\end{property}

The proof follows immediately from Proposition 2 in DAG-Rider~\cite{allyouneed} and Corollary C. in Bullshark~\cite{bullshark}.

%A good validator in synchrnoy is never skipped (maybe move to analysis)

%A crash validator is always skipped. (for latency analysis)

\section{\sysname}

\sysname is protocol agnostic and can be directly applied to all Narwhal-based consensus protocols, i.e., DAG-Rider, Tusk, and Bullshark.
It makes no changes to the protocols but rather combines their instances in essentially a ``black-box" manner. 
The entire correctness argument can be derived solely from Property~\ref{pro:firstanchor}.

\subsection{Pipelining} 
\label{sec:pipelining}
A natural progression after the high throughput scalability of BFT consensus achieved by Narwhal is to reduce latency as much as possible.
To this end, Bullshark already halved DAG-rider's latency for ordering anchors from 4 rounds to 2 by adding an optimistic path under the partially synchronous network communication assumption.

Intuitively, it is hard to imagine latency lower than 2 rounds as in the interpretation of the DAG structure as a consensus protocol, one round is needed to "propose" the anchor, while another is needed for "voting". 
However, only anchors can be ordered in 2 rounds. 
The rest of the vertices are ordered as part of the causal history of some anchor and require a minimum latency of 3 or 4 rounds.
This is because the vertices in a "voting" round require (minimum) 3 rounds, while vertices that share a round with an anchor have to wait for at least the next anchor to be ordered, thus requiring (minimum) 4 rounds.
An illustration of the ordering latency for different vertices appears in Figure~\ref{fig:latency}.

\begin{figure}
    \centering
    \includegraphics[width=0.48\textwidth]{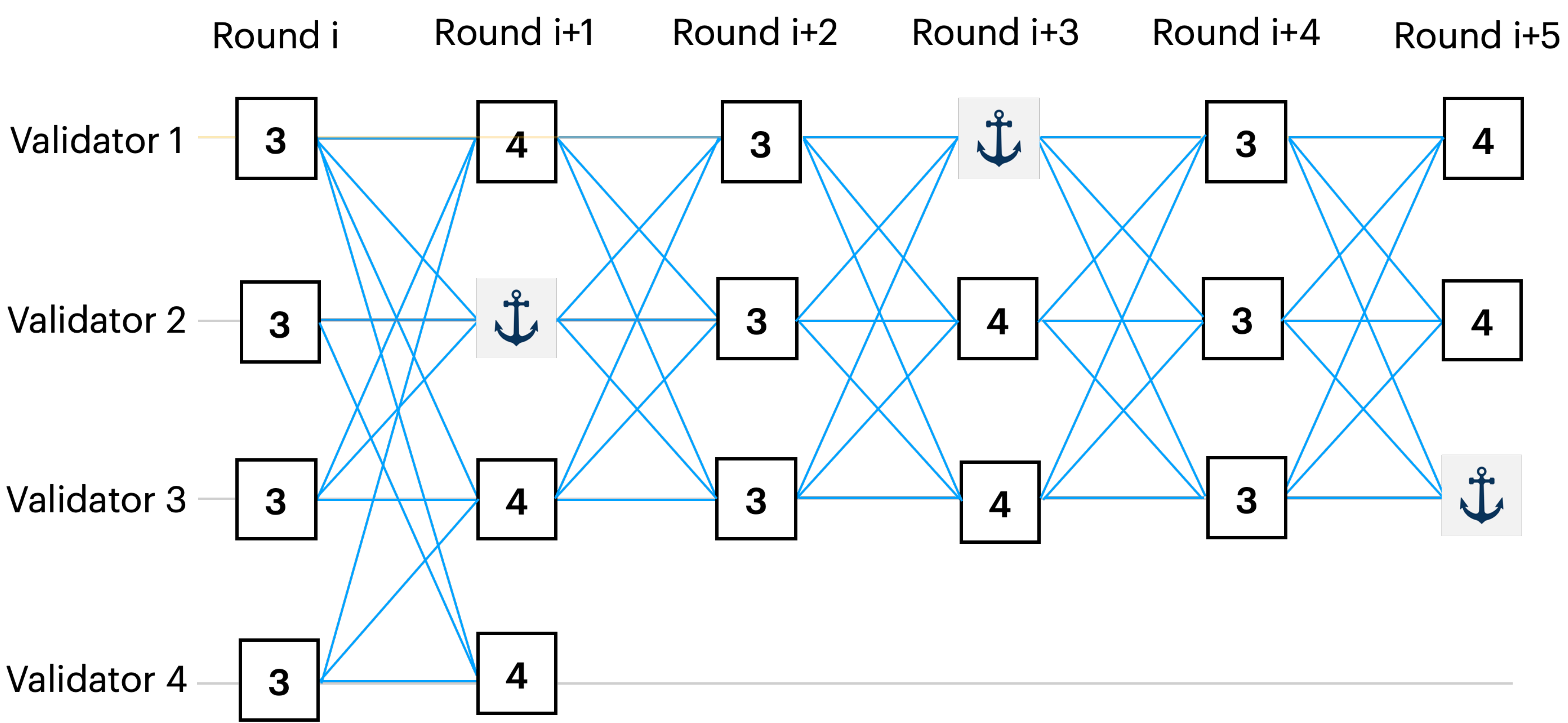}
    \caption{Illustration of the number of rounds required for each vertex in the DAG to be ordered in the best case, according to the Bullshark protocol. The number in each vertex represents its minimum latency. 
    For example, the anchor of round $i+1$ can be ordered in round $i+2$, but the other vertices in this round require at least 4 rounds to be ordered.}
    
    \label{fig:latency}
\end{figure}

Ideally, to reduce the latency of ordering vertices we would like to have an anchor in every round.
This would allow for non-anchor vertices to be ordered as a part of some anchor's causal history in each and every round, making latency and throughput of the protocol less spiky. 
%This would reduce the latency of all vertices to the minimum possible by the protocol's order rule and make the throughput less spiky. 
In Bullshark, it would become possible for every non-anchor vertex to be ordered in 3 rounds (see Figure~\ref{fig:latency}), while in DAG-Rider the latency may be reduced from 10 rounds to 7 in expectation. 

\paragraph{Solution}
Let $\mathcal{P}$ be any Narwhal-based consensus protocol.
On a high level, the core technique in \sysname is to execute $\mathcal{P}$ until it, as a consensus protocol, guarantees agreement on some part of the DAG for all validators.
%\sasha{This is not clear. I think we should link the property and talk directly about the first ordered anchor. Otherwise, what prefix?
%Also, below you say "the next round". Next after what?
%edit: moe details are given below so this flaf might be fine. But in any case I would start with "intuitively" or "on a high level" to emphasize that we are being imprecise here.}
Starting from the round following the agreed part of the DAG, all validators can switch over and start executing a new instance of $\mathcal{P}$ (or a different Narwhal-based consensus protocol, if desired) from scratch.
While the instances are not executing concurrently, this scheme effectively pipelines the ``proposing" and ``voting" rounds. As a result in \sysname, in a good case an anchor is ordered in every round.

The pseudocode appears in Algorithm~\ref{alg:pipelining}.
In the beginning of the protocol, all validators interpret the DAG from round $0$, and the function $F$ is some pre-defined deterministic mapping from rounds to leaders.
Each validator locally runs $\mathcal{P}$, using $F$ to determine the anchors, until it orders the first anchor, denoted by $A$ in round $r$.
The key is that, by the correctness of $\mathcal{P}$ as stated in Property~\ref{pro:firstanchor}, all validators agree that $A$ is the first ordered anchor (previous anchors are skipped by all validators).
Consequently, each validator can re-interpret the DAG from the next round (round $r+1$) according to a new instance of the protocol $\mathcal{P}$ (or another Narwhal-based protocol) executing from scratch from round $r+1$.

\begin{algorithm}
\caption{Pipelining in \sysname}
\begin{algorithmic}[1]
        \State $\emph{current\_round} \gets 0$
        \State $F: R \rightarrow A$
        \Comment{deterministic rounds to anchors mapping}
        \While{\emph{true}}
        \State execute $\mathcal{P}$, select anchors by $F$, starting from \StateX\emph{current\_round} until the first ordered (not skipped)
        \StateX anchor is determined.

        \State let $A$ be the first ordered anchor in round $r$
        \State order $A$'s causal history according to $\mathcal{P}$ 
        %\Statex \hspace{4.5mm} deterministic rule
        \State $\emph{current\_round} \gets r+1$ 
        %\State update $F$ according to $A$'s causal history
        %\label{line:updatingF2}
        %\State $F(x) \gets F(x - \emph{current\_round})$
        \EndWhile
\end{algorithmic}
\label{alg:pipelining} 
\end{algorithm}

To order the DAG, much like in the original $\mathcal{P}$, the validators deterministically order $A$'s causal history, and by the Completeness property, arrive at the same total order over the same vertices.
Note that without re-interpreting the DAG according to a new instance of $\mathcal{P}$ starting from round $r+1$, the next anchor according to the previously executing instance of the protocol would appear in a strictly later round (e.g. $r+4$ for DagRider and $r+2$ for Bullshark).
The above process can continue for as long as needed.
An illustration appears in Figure~\ref{fig:pipelining}.

\begin{figure}
    \centering
    \includegraphics[width=0.48\textwidth]{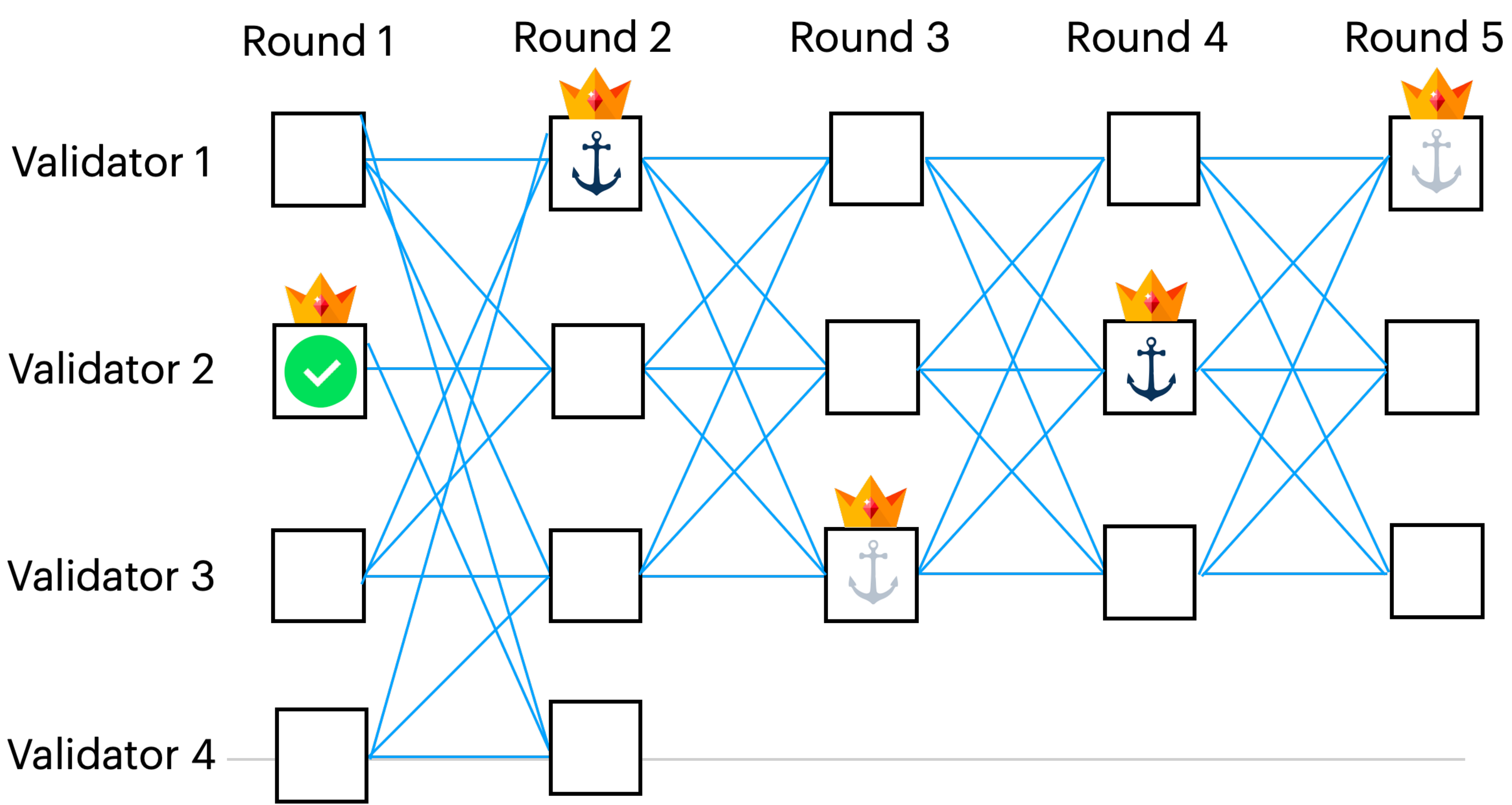}
    \caption{Illustration of \sysname's pipelining integrated into Bullshark.
    The vertices that are fixed to be anchors by $F$ are marked by a crown.
    The protocol starts by interpreting the DAG with anchors in rounds $1, 3$, and $5$. Bullshark determines that the anchor in round 1, marked by a green checkmark, is the first to be ordered. 
    Then, a new instance of Bullshark starts at round 2 with the anchors marked in rounds $2$ and $4$.}
    \label{fig:pipelining}
\end{figure}

Note that in Algorithm~\ref{alg:pipelining}, function $F$ is fixed and used by each instance of protocol $\mathcal{P}$. In a true "black-box" implementation, the round numbers could be different from the perspective of the executing protocol instance (i.e. start from $0$ for each new instance). However, $F$ is fixed and always assigns the same anchor to any given round $r$ in \sysname regardless of the protocol instance used for this round.

Note that with Shoal, ordering an anchor vertex requires 2 rounds, while all other vertices require 3. 
In Section~\ref{sec:multi} we discuss a potential direction to reduce the latency for non-anchor vertices by treating all vertices as anchors. 
Intuitively, we can use Property~\ref{pro:firstanchor} to instantiate a binary agreement to decide whether to commit each vertex individually.
%Leader reputation is a property that many times overlooked in BFT literature but has a significant impact on production performance.

\subsection{Leader Reputation}
\label{sec:leaderreputation}
BFT systems are designed to tolerate Byzantine failures in order to provide as strong as possible worst-case reliability guarantees.
However, actual Byzantine failures rarely occur in practice.
This is because validators are highly secured and have strong economic incentives to follow the protocol.
Slow or crashed leaders are a much more frequent occurrence which can significantly degrade the system performance.
In Narwhal-based BFT, if the leader of round $r$ crashes, no validator will have the anchor of round $r$ in its local view of the DAG. 
Thus, the anchor will be skipped and no vertices in the previous round can be ordered until some later point due to an anchor in a future round.

The way to deal with missing anchors is to somehow ensure that the corresponding leaders are less likely to be elected in the future.
%slow/crashed leaders that are not able to drive the protocol progress is to make sure they are less likely to be elected in the future.
A natural approach to this end is to maintain a reputation mechanism, assigning each validator a score based on the history of its recent activity. A validator that has been participating in the protocol and has been responsive would be assigned a high score. Otherwise, the validator is either crashed, slow, or malicious and a low score is assigned.
The idea is then to deterministically re-compute the pre-defined mapping from rounds to leaders every time the scores are updated, biasing towards leaders with higher scores. In order for validators to agree on the new mapping, they should agree on the scores, and thus on the history used to derive the scores.

Such a mechanism was previously proposed in~\cite{cohen2022aware} and implemented in the Diem Blockchain~\cite{diembft} to enhance the performance of Jolteon~\cite{jolteon}, a leader-based consensus protocol.
One important property Jolteon is that Safety is preserved even if validators disagree on the identity of the leader, while liveness is guaranteed as long as they eventually converge.
Hence, validators could re-assign the reputation scores every time a new block was committed, even though during asynchronous periods it was possible for different validators to commit the same block in different rounds.
Unfortunately, this is not the case for Narwhal-based BFT. If validators disagree on the anchor vertices, they will order the DAG differently and thus violate safety.
This makes the leader reputation problem strictly harder in Narwhal-based BFT. 

\paragraph{Solution}
\sysname constructs a protocol identical to a given Narwhal-based consensus protocol $\mathcal{P}$, but to support leader reputation anchors are selected according to a function $F$ that takes into account validators' recent activity, e.g., the number of vertices they have successfully added to the DAG.
The function $F$ should be updated as frequently as possible and aim to select validators with a better reputation as leaders more often than their counterparts with a lower reputation.

In \sysname, pipelining and leader reputation can be naturally combined as they both utilize the same core technique of re-interpreting the DAG after agreeing on the first ordered anchor. 
In fact, the pseudocode for \sysname appears in Algorithm~\ref{alg:leaderreputation} only differs from Algorithm~\ref{alg:pipelining} by adding line~\ref{line:updatingF2}.
The idea is that the validators simply need to compute a new mapping, starting from round $r+1$, based on the causal history of ordered anchor $A$ in round $r$ (which they are guaranteed to agree on by Property~\ref{pro:firstanchor}). Then, the validators start executing a new instance of $\mathcal{P}$ from round $r+1$ with the updated anchor selection function $F$. 

\begin{algorithm}
\caption{\sysname}
\begin{algorithmic}[1]
        \State $\emph{current\_round} \gets 0$
        \State $F: R \rightarrow A$
        \Comment{deterministic rounds to anchors mapping}
        \While{\emph{true}}
        \State execute $\mathcal{P}$, select anchors by $F$, starting from \StateX\emph{current\_round} until the first ordered (not skipped)
        \StateX anchor is determined.

        \State let $A$ be the first ordered anchor in round $r$
        \State order $A$'s causal history according to $\mathcal{P}$ 
        %\Statex \hspace{4.5mm} deterministic rule
        \State $\emph{current\_round} \gets r+1$ 
        \State update $F$ according to $A$'s causal history
        \label{line:updatingF2}
        \EndWhile
\end{algorithmic}
\label{alg:leaderreputation} 
\end{algorithm}

Our solution is protocol agnostic and can be directly applied to all Narwahl-based consensus protocols, i.e., DAG-Rider, Tusk, and Bullshark. An illustration can be found in Figure~\ref{fig:leaderreputation}.
\sysname makes no changes to the protocols but rather combines their instances, and the entire correctness argument can be derived solely from Property~\ref{pro:firstanchor}.

%The illustration can be found in Figure~\ref{fig:leaderreputation}

%The pseudocode appears in Algorithm~\ref{alg:leaderreputation}.
%At the beginning of the protocol, all validators interpret the DAG from round $0$, and the function $F$ is some pre-defined deterministic mapping. Each validator locally runs $\mathcal{P}$, using $F$ to determine the anchors, until it orders the first anchor, denoted by $A$ in round $r$.
%By the correctness of $\mathcal{P}$, as stated in Property~\ref{pro:firstanchor}, all validators agree that $A$ is the first ordered anchor.
%Consequently, each validator can re-interpret the DAG from the next round (round $r+1$) according to a new instance of the protocol $\mathcal{P}$ (or another Narwhal-based protocol) executing from scratch from round $r+1$.

%To incorporate leader reputation, the validators update the function $F$ according to the information in $A$'s causal history. 
%By completeness, all validators observe the same causal history of anchor $A$, and thus they all make identical updates to the function $F$.
%Thus, validators order $A$'s causal history and start executing a new instance of $\mathcal{P}$ from round $r+1$ with the updated anchor selection function $F$. 

\begin{figure}
    \centering
    \includegraphics[width=0.48\textwidth]{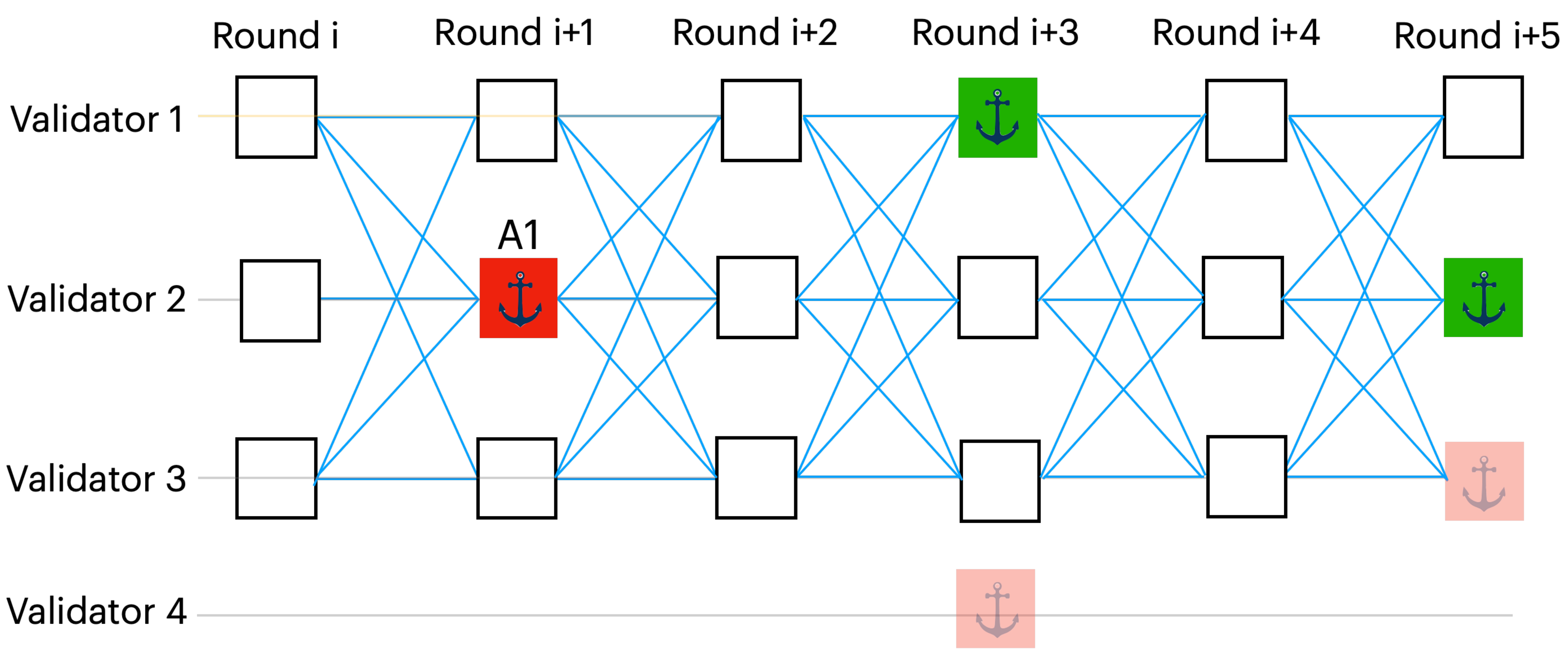}
    \caption{Illustration of \sysname's leader reputation integrated into Bullshark (no pipelining). 
    First, the DAG is interpreted via the Bullshark protocol and the red anchors. The anchor in round i+1, A1, is determined to be the first ordered anchor. 
    Then, based on A1's causal history, new anchors are selected for future rounds (marked in green). 
    Note that validator 4, which had an anchor according to the red selection, no longer has an anchor according to the new mapping (it was not performing well).
    Then, A1's causal history is deterministically ordered as in the original Bullshark, and a new instance of Bullshark starts at round $i+2$ based on the green anchors.}
    
    \label{fig:leaderreputation}
\end{figure}

%Correctness of the scheme still follows from Property~\ref{pro:firstanchor}.

%Each validator starts executing an instance of $\mathcal{P}$ until it determines the first non-skipped anchor (in other words, the anchor that is ordered first) according to the local deterministic ordering rule of $\mathcal{P}$ based on the validator's local view of the DAG. 
%The key is that by Property~\ref{pro:firstanchor} all validators agree on which anchor is ordered first.
%Let $A$ be this anchor in round $r$.

%Since all validators agree on the same anchor $A$, they can all start a new instance of $\mathcal{P}$ from round $r+1$. 

%In order to start a new instance of $\mathcal{P}$ in round $r+1$, validators need to agree on the mapping from rounds to leaders.

%One way is to use the same mapping as the original one but shift it to start from round $r+1$.

%As we analyze below, in the best case \sysname orders an anchor every round.  
%For Bullshark, this means anchors have 2 rounds of latency and the rest 3.

%\input{sections/further}
%\section{Analysis}\label{sec:proofs}
%\sysname is a framework to add leaser reputation and pipelining to any Narwhal-based DAG-BFT protocol.
%As such, its security and latency depend on the underlying protocol. 

%\subsection{Security}\label{sub:security}
\section{Correctness}\label{sec:security}
To prove the correctness of \sysname (Algorithm~\ref{alg:leaderreputation}) we assume that the underlying protocol satisfies Property~\ref{pro:firstanchor}, which we will use inductively. 
%The proofs for leader reputation and pipelining are very similar and thus we provide below one proof that captures both.

\begin{lemma}
\label{lem:safety}
    Let $P$ be a Narwhal-based DAG-BFT protocol that satisfies Property~\ref{pro:firstanchor}.
    Let $D$ be a round-based DAG, and assume a known to all function $F$ that maps rounds to anchors.
    Then all the locally ordered lists of anchors by honest validators executing \sysname with $P$ according to $F$ share the same prefix.
\end{lemma}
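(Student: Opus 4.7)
The plan is to prove the statement by induction on the number of completed iterations of the main \textbf{while} loop in Algorithm~\ref{alg:leaderreputation}, showing that after $k$ iterations all honest validators have ordered the same sequence of anchors $A_1, A_2, \ldots, A_k$ and hold the same values of \emph{current\_round} and $F$ going into iteration $k+1$. The locally ordered list of anchors of any honest validator is exactly the concatenation of the per-iteration first-ordered anchors, so a common prefix on the iteration-indexed sequence yields the claim.

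For the base case, all honest validators enter the first iteration with the same initial \emph{current\_round} $= 0$ and the same a-priori mapping $F$. Since $F$ is agreed upon before the first instance of $\mathcal{P}$ begins, Property~\ref{pro:firstanchor} applied to this instance directly gives that all honest validators agree on the first ordered (non-skipped) anchor $A_1$, and in particular on the round $r_1$ in which it appears.

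For the inductive step, assume that all honest validators agree on $A_1, \ldots, A_k$ and enter iteration $k+1$ with the same \emph{current\_round} $= r_k + 1$ and the same mapping $F_k$. Each honest validator updates $F$ to $F_{k+1}$ as a deterministic function of the causal history of $A_k$ (line~\ref{line:updatingF2}). By the Completeness property of the Narwhal DAG construction (Section~\ref{sub:background}), any two honest validators that both see $A_k$ in their local view of the DAG see identical causal histories; hence the deterministic update produces the same $F_{k+1}$ at every honest validator. Thus all honest validators launch the $(k+1)$-th instance of $\mathcal{P}$ from the same starting round $r_k + 1$ with the same anchor-selection mapping $F_{k+1}$, which is exactly the hypothesis needed to invoke Property~\ref{pro:firstanchor} again. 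This yields agreement on the next first-ordered anchor $A_{k+1}$ and closes the induction.

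The main obstacle I anticipate is justifying, cleanly and without leaking into the internals of $\mathcal{P}$, that each new instance of $\mathcal{P}$ can be analyzed as a standalone execution starting at round $r_k + 1$ with mapping $F_{k+1}$, so that Property~\ref{pro:firstanchor} applies verbatim. The key points to make explicit are that \textbf{(i)} Property~\ref{pro:firstanchor} only requires agreement on $F$ at the start of the instance, which we obtain from Completeness applied to $A_k$'s causal history; \textbf{(ii)} the DAG vertices in rounds $\geq r_k + 1$ form a valid round-based DAG input for a fresh run of $\mathcal{P}$; and \textbf{(iii)} the mapping $F$ is fixed over a given round across all instances, so the definition of "anchor in round $r$" is unambiguous across honest validators. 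Once these framing points are spelled out, the induction itself is routine.
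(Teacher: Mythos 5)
Your proposal is correct and follows essentially the same route as the paper: induction on the ordered anchors, with the base case from Property~\ref{pro:firstanchor}, the inductive step using Completeness of the DAG to get agreement on the updated $F$ and on the common starting round, and then re-invoking Property~\ref{pro:firstanchor} on the fresh instance (the paper formalizes your point (ii) by considering the DAG $D'$ obtained from $D$ by removing the first $r$ rounds).
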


\begin{proof}
    Proof is by induction on the ordered anchors.

    \textbf{Base:} We need to show that all honest validators agree on the first anchor. 
    Since \sysname starts by running $P$ until the first anchor is ordered, the base case follows immediately from Property~\ref{pro:firstanchor}.

    \textbf{Step:} Assume all honest validators agree on the first $k$ ordered anchors, we need to prove that they agree on anchor $k+1$.
    First, we show that all honest validators agree on the new function $F$ (Line~\ref{line:updatingF2} in Algorithm~\ref{alg:leaderreputation}). 
This holds because the new function $F$ is deterministically computed according to the information in $k$'s causal history, and by the Completeness property of the DAG, all honest validators have the same causal history of anchor $k$ in their local view.

    Next, let $r$ be the round of anchor $k$.
    By the inductive assumption, all honest validators agree on $r$.
    Thus, all honest validators start the next instance of $P$ in the same round $r+1$.

    Now consider a DAG $D'$ that is identical to $D$ except it does not have the first $r$ rounds.
    By Property~\ref{pro:firstanchor}, all validators that run $P$ with the new function $F$ on $D'$ agree on the first ordered anchor in $D'$.
    Therefore, all validators agree on anchor $k+1$ in $D$.
\end{proof}

\begin{theorem}
    Let $P$ be a Narwhal-based DAG-BFT protocol that satisfies Property~\ref{pro:firstanchor}.
    \sysname with $P$ satisfies total order.
\end{theorem}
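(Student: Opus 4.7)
The plan is to derive total order directly from Lemma~\ref{lem:safety} together with the Completeness property of the Narwhal DAG and the determinism of the causal-history ordering rule inherited from $P$. Recall that in \sysname (Algorithm~\ref{alg:leaderreputation}), the globally output sequence of each honest validator is obtained by concatenating, in order, the deterministic linearizations of the causal histories of the ordered anchors $A_1, A_2, \dots$ (with vertices already emitted skipped). So to establish total order it suffices to show that, for every $k$, all honest validators (i) emit the same $A_k$, (ii) see the same causal history of $A_k$ in their local DAG, and (iii) linearize that causal history, modulo the already-ordered prefix, in the same way.

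First I would invoke Lemma~\ref{lem:safety} to get (i): the locally ordered anchor lists of honest validators share a common prefix, so in particular they agree on $A_1, A_2, \dots$ up to any finite point. Next I would appeal to the Completeness property stated in Section~\ref{sub:background}: if two honest validators both have vertex $A_k$ in their local view of the DAG, then they have identical causal histories for $A_k$. Since each $A_k$ that is ordered must be present in the local DAG of the validator that orders it, and by eventual delivery of the reliable broadcast it is also present in every other honest validator's eventual local view, Completeness gives (iii's prerequisite): the causal history of $A_k$ is the same set of vertices with the same edges for every honest validator. Finally, for (iii) itself, I would cite that $P$ (and hence \sysname) uses a fixed deterministic rule to linearize a causal history; applied to identical inputs this produces identical outputs, and the ``already-ordered'' prefix is also identical across validators by induction on $k$ using (i) and the identity of earlier causal histories.

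Composing these three observations, I would argue by induction on $k$ that the $k$-th contiguous block of the output sequence is identical across all honest validators, which is exactly the total order property. I would also briefly note that every vertex that appears in any honest validator's local DAG is eventually covered: by eventual delivery it appears in every honest view, and liveness of the underlying $P$ guarantees that some future anchor is eventually ordered whose causal history contains it, so it is emitted in some block.

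The main obstacle I anticipate is not conceptual but bookkeeping: being careful that the ``re-interpretation'' from round $r+1$ onward is legitimately an execution of $P$ on a DAG satisfying the Narwhal reliable-broadcast properties, so that Property~\ref{pro:firstanchor} and Completeness remain applicable at each stage of the induction. This is exactly the construction of $D'$ used in the proof of Lemma~\ref{lem:safety}, and reusing that construction makes the step routine; the only subtlety is to observe that restricting $D$ to rounds $> r$ preserves Validity, Non-equivocation, and Eventual Delivery, so all properties \sysname relies on carry over to each fresh instance of $P$.
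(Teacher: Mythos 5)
Your proposal is correct and follows essentially the same route as the paper's proof: invoke Lemma~\ref{lem:safety} for agreement on the ordered anchors, then use the Completeness property plus the determinism of the causal-history linearization rule to conclude that all honest validators emit identical output blocks. The additional bookkeeping you supply (eventual delivery, coverage of all vertices, and preservation of the reliable-broadcast properties under restriction to later rounds) is a more careful elaboration of details the paper leaves implicit, not a different argument.
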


\begin{proof}
    By Lemma~\ref{lem:safety}, all validators order the same anchors. The theorem follows from the DAG Completeness property as all validators follow the same deterministic rule to order the respective causal histories of the ordered anchors.  
\end{proof}

\section{Implementation and Prevalent Responsiveness} \label{sec:implementation}

We have implemented Narwhal and the partially synchronous version of Bullshark as part of a publicly available open-source blockchain project\footnote{In order to uphold the anonymity requirement of the submission, we do not disclose the name of the blockchain project.}. This blockchain is live and the process of productionizing our implementation is underway. 
The code is written in Rust, utilizing Tokio\footnote{\url{https://tokio.rs}} for asynchronous networking, BLS~\cite{bls2001} implemented over BLS12-381 curves for signatures, RocksDB\footnote{\url{https://rocksdb.org}} for persistent data storage, and the Noise\footnote{\url{https://github.com/noiseprotocol/noise_spec}} protocol for authenticated messages.

%Our implementation is fully integrated into the system, meaning that our evaluation relies on the system's networking stack. 
%As mentioned in the introduction, we improve Narwhal and Bullshark with a few system-level optimizations that improve latency.
%In our evaluation, we compare three versions, which are described below.

\subsection{Vanilla Bullshark}
We implemented Bullshark according to~\cite{bullshark}, but additionally incorporated weak links per~\cite{allyouneed} in our DAG construction.
Observing $n-f$ vertices in a round is sufficient for progressing to the next round.
Therefore, without weak links, slow validators may consistently lag behind others in broadcasting their vertices and thus may consistently fail to add their vertices to the DAG.  
This will incur significant latency for their client transactions.
Weak links from a vertex can reference vertices from earlier rounds in addition to the normal (strong) links to $n-f$ vertices from the previous round. 
These weak links are used when establishing the causal history of ordered anchors and thus facilitate the inclusion of transactions contributed by the slow validators into the total order.

%These links do not influence the Bullshark ordering logic of anchors but are utilized to establish the causal history of anchors. Consequently, slow validators can contribute vertices to the DAG, ensuring the ordering of their transactions. 
We refer to this implementation as Vanilla Bullshark.
It is important to note that adding the support for weak links increases the average latency compared to the figures presented in~\cite{bullshark}, which did not employ the weak links. 

\subsection{Eliminating Timeouts}
The short paper for the stand-alone partially synchronous version of Bullshark~\cite{bullsharksync} assumes the DAG is given and focuses on the ordering of its vertices. On the other hand, full Bullshark is an asynchronous protocol with a fast path under partial synchrony. The full Bullshark paper~\cite{bullshark} describes how to build the DAG and in particular, the incorporation of timeouts to support the fast path.

Validators in Bullshark must observe $n-f$ vertices in a round to advance to the next round.
Even rounds have anchors, while vertices in odd rounds determine the ``voting" pattern.
Full Bullshark uses the following timeouts for every validator to support the fast path:
\begin{itemize}
    \item Even-round: wait until the anchor of the round is delivered (or the timeout expires).
    \item Odd-round: wait until $2f+1$ vertices that link to the anchor in the previous round are delivered (or the timeout expires). 
\end{itemize}
The rationale for the above logic is to help order the anchor within 2 rounds.
However, part of the contribution of this paper is to eliminate these timeouts in such a way that actually significantly improves latency, according to our evaluation. Having fewer cases where timeouts can occur also inherently simplifies the potential state space and thus, the implementation of the protocol. 
In Section~\ref{sec:evaluation}, we refer to even-rounds as \emph{anchor rounds} and to odd-rounds as \emph{vote rounds.}

\paragraph{Vanilla bullshark w/o vote Timeout}
In the full Bullshark $2f+1$ votes are required to order anchors.
Without timeouts in odd rounds, a Byzantine adversary can prevent the fast pass from making progress even during synchrony.
As long as Byzantine validators deliberately not link to the anchor, and even $1$ of their vertices get delivered among the first $2f+1$ to an honest validator in an odd round, then the honest validator will not be able to order the anchor.

%If vertices originated by Byzantine validators deliberately do not link (vote) to the anchor, then it is enough for a validator to deliver $1$ of them out of the first $2f+1$ to advance round without ordering.
%The timeout, therefore, makes sure that validators wait for all vertices by honest validators, and thus order the anchors before advancing rounds.
%The full Bullshark needs the odd-round timeout to deal with Byzantine failures.
%Basically, since $2f+1$ votes are required to order anchors, without the timeouts, an adversary can prevent the fast pass progress even during synchrony. 

However, we discovered that we can completely eliminate timeouts in odd rounds in the partially synchronous variant of Bullshark.
%However, we while implementing the stand-alone partially synchronous version, we realized that the odd-round timeout is redundant. 
The anchor ordering rule in this case is $f+1$ votes~\cite{bullsharksync}. 
As a result, even if $f$ out of the first $2f+1$ vertices delivered to a validator in a round is from Byzantine validators (and do not link to the anchor), the remaining $f+1$ vertices will link to the anchor due to the even-round timeout and be sufficient to order it.

%The baseline Bullshark is identical to vanilla Bullshark but without the 
\paragraph{Baseline Bullshark}
FLP impossibility result~\cite{flp} dictates that any deterministic protocol providing liveness under partial synchrony must use timeouts.
In Bullshark, without timeouts in the even rounds, an honest leader that is even slightly slower than the fastest $2f+1$ validators will struggle to get its anchor linked by other vertices.
As a result, the anchor is unlikely to be ordered.
%For Bullshark, without the even-round timeout, if a validator corresponding to an anchor is honest but slightly slower than the fastest $2f+1$ validators, then no validator will link to the anchor, and the anchor will not be ordered.
The timeout, therefore, ensures that all honest validators link to anchors during periods of synchrony (as long as the leader has not crashed and actually broadcasts the anchor vertex).

Even though timeouts are unavoidable in the worst case, we observe that the DAG construction combined with the leader reputation mechanism allows avoiding them in vast majority of cases in practice.
This is in contrast to leader-based monolithic consensus protocols, where timeouts are the only tool to bypass the rounds with bad leaders.
Without timeouts, a monolithic protocol could stall forever as there is no other mechanism to stop waiting for a crashed leader.
It is also hard to set the timeouts appropriately: conservative timeouts lead to excessive waiting for crashed leaders, while aggressive timeouts lead to bypassing slower validators (and hence unnecessarily failed rounds).

In contrast, the DAG construction provides a ``clock" that estimates the network speed.
Even without timeouts, the rounds keeps advancing as long as $2f+1$ honest validators continue to add their vertices to the DAG.
%Without timeouts, since $2f+1$ vertices are enough to advance rounds, validators will keep building the DAG with honest validators.
%Therefore, validators will keep trying to order new anchors indefinitely. 
As a result, the DAG can evolve despite some leaders being faulty. 
Eventually, when a non-faulty leader is fast enough to broadcast the anchor, the ordering will also make progress. 

Recall that to be ordered, in partially synchronous Bullshark, an anchor needs $f+1$ votes (links) out of the $3f+1$ vertices.
Therefore, as our evaluation demonstrates, in the failure-free case, most of the anchors are ordered in the next round.
The benefit are even more pronounced when there are failures. 
This is because a crashed validator causes a timeout to expire, stalling the protocol for the entire duration. 
Without a timer, however, the DAG will advance rounds at network speed and the Bullshark protocol is able to immediately move to the next anchor.

\paragraph{Timeouts as a fallback}
By FLP~\cite{flp} impossibility result, there exists an adversarial schedule of events that can prevent all anchors from getting enough votes to be ordered.
This scenario is extremely unlikely to occur in practice, but to be on the safe side, the protocol can deal with it by falling back to using timeouts after a certain amount of consecutive skipped anchors.

\subsection{\sysname of Bullsharks}
A realistic case in which timeouts can help the performance of a Narwhal-based consensus protocol is when the leader is slower than other validators.
Then, as discussed earlier, waiting for an anchor to be delivered even after $2f+1$ other vertices can allow the anchor to be committed in the next round.
While we eliminated timeouts from partially synchronous Bullshark, note that, due to the leader reputation mechanism, \sysname instantiated with Bullshark does better than repeatedly waiting for the slow leaders.
Instead, the leader reputation mechanism excludes (or at least significantly reduces the chances of) slow validators from being selected as leaders.
This way, the system takes advantage of the fast validators to operate at network speed.

\paragraph{Prevalent Responsiveness}
\sysname provides network speed responsiveness under all realistic failure and network scenarios, a property we name \emph{Prevalent Responsiveness}.
Specifically, compared to optimistic responsiveness, \sysname continues to operate at network speed even during asynchronous periods or if leaders fail for a configurable number of consecutive rounds. 

We implemented leader reputation and pipelining on top of the Baseline Bullshark and compared it to the baseline (no timeouts) implementation.    

\paragraph{Leader reputation logic}
As explained in Section~\ref{sec:leaderreputation}, \sysname ensures all validators agree on the information used to evaluate the recent activity and to bias the leader selection process accordingly towards healthier validators.
Any deterministic rule to determine the mapping from rounds to leaders (i.e. the logic in pseudocode Line~\ref{line:updatingF2} in Algorithm~\ref{alg:leaderreputation}) based on this shared and agreed upon information would satisfy the correctness requirements.
Next, we discuss the specific logic used in our implementation.

At any time each validator is assigned either a high or a low score, and all validators start with a high score.
After ordering an anchor $v$, each validator examines $v$'s causal history $H$.
Every skipped anchor in $H$ is (re-)assigned a low score, and every ordered anchor in $H$ is (re-)assigned a high score.
Then, the new sequence of anchors is pseudo-randomly chosen based on the scores, with a validator with a high score more likely to be a leader in any given round.
Note that while the validators use the same pseudo-randomness (so that they agree on the anchors), the computation is performed locally without extra communication. 

Assigning higher scores to validators whose anchors get ordered ensures that future anchors correspond to faster validators, thus increasing their probability to be ordered.
However, we ensure that the low score is non-zero, and thus underperforming validators also get a chance to be leaders. This crucially gives a temporarily crashed or underperforming validator a chance to recover its reputation.
\section{Evaluation} 
\label{sec:evaluation}

We evaluated the performance of the aforemententioned variants of Bullshark and \sysname on a geo-replicated environment in Google Cloud. 
In order to show the improvements from pipelining and leader reputation independently, we also evaluate \sysname PL, which is a \sysname instantiation with only pipelining enabled, and \sysname LR, which is a \sysname instantiation with only Leader Reputation enabled. 
With our evaluation, we aim to show that 
(i) \sysname maintains the same throughput guarantees as Bullshark.
(ii) \sysname can provide significantly lower latency than Bullshark and its variants.
(iii) \sysname is more robust to failures and can improve latency with the help of Leader Reputation.

For completeness, we also compare against Jolteon~\cite{jolteon}, which is the current consensus protocol of the production system we use.
Jolteon combines the linear fast path of Tendermint/Hotstuff with a PBFT style view-change, and as a result, reduces Hotstuff latency by 33\%. 
The implementation extends the original Jolteon protocol with a leader reputation mechanism, which prioritizes well-behaved leaders from previous rounds for future rounds.
In addition, to mitigate the leader bottleneck and support high throughput, the implementation uses the Narwhal technique to decouple data dissemination via a pre-step component (called Quorum Store~\cite{quorum-store}). 

\begin{figure}[t]
    \includegraphics[width=\columnwidth]{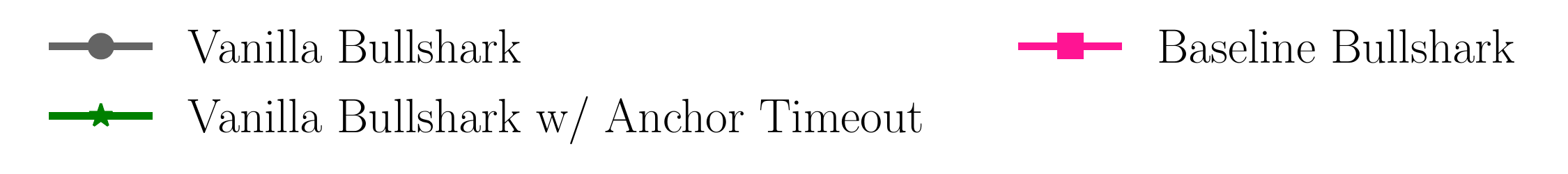}
    \begin{subfigure}[b]{0.49\columnwidth}
	\centering
	\caption{Throughput}
	\includegraphics[width=\textwidth]{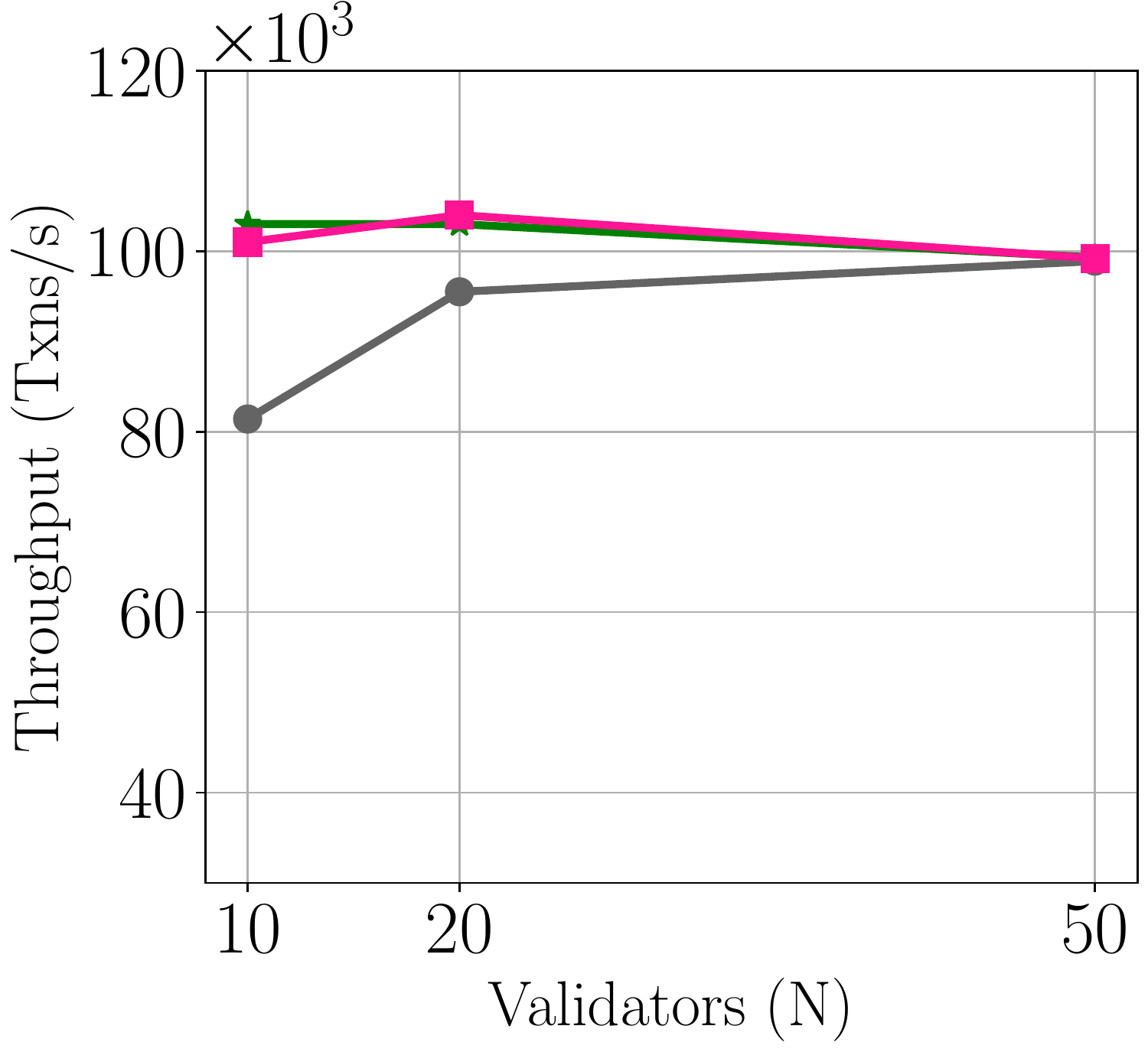}
	\label{fig:base-nf-tps}
    \end{subfigure}
    \begin{subfigure}[b]{0.49\columnwidth}
        \centering
	\caption{Average Latency}
        \includegraphics[width=\textwidth]{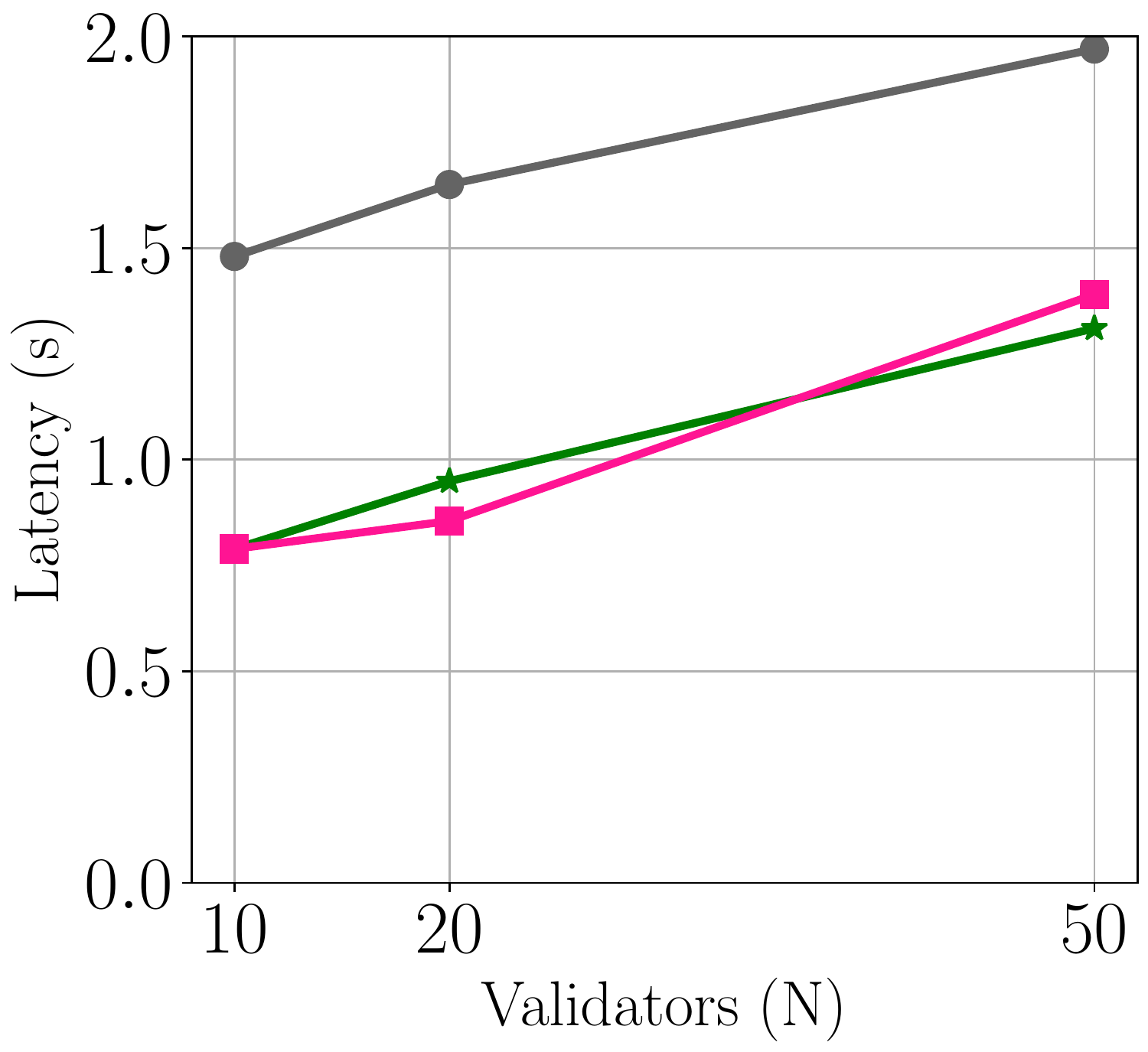}
	\label{fig:base-nf-lat}
    \end{subfigure}
    \vspace{-10pt}
    \caption{Baseline performance under no failures}
    \label{fig:base-nf-plots}
\end{figure}

We evaluate prevalent responsiveness by presenting experiments that compare variants of Bullshark w.o. timeout in different rounds versus \sysname as discussed in Section~\ref{sec:implementation}.

\textbf{Experimental Setup.} 
Our experimental setup consists of \texttt{t2d-standard-32} type virtual machines spread equally across three different Google Cloud regions: us-west1, europe-west4, asia-east1. 
Each virtual machine has 32 vCPUs, 128GB of memory, and can provide up to 10Gbps of network bandwidth. 
The round-trip latencies are: 
118ms between us-west1 and asia-east1, 
251ms between europe-west4 and asia-east1, 
and 133ms between us-west1 and europe-west4. 
The experiments involve three different values of N (the number of validators): 10, 20, and 50, tolerating up to 3, 6, and 16 failures, respectively.

We only measure the consensus performance to avoid introducing noise from other parts of the production system, such as execution and storage. 
The transactions are approximately 270B in size. We set a maximum batch size of 5000 transactions.
In our experiments, we measure \emph{latency} as the time elapsed from when a vertex is created from a batch of client transactions to when it is ordered by a validator. The timeouts for moving to the next round, when applicable, are set to 1s, which is less than the 1.5s timeout used by the production Blockchain system we use.

\begin{figure}[t]
    \begin{subfigure}[b]{0.49\columnwidth}
	\centering
	\caption{Throughput}
	\includegraphics[width=\textwidth]{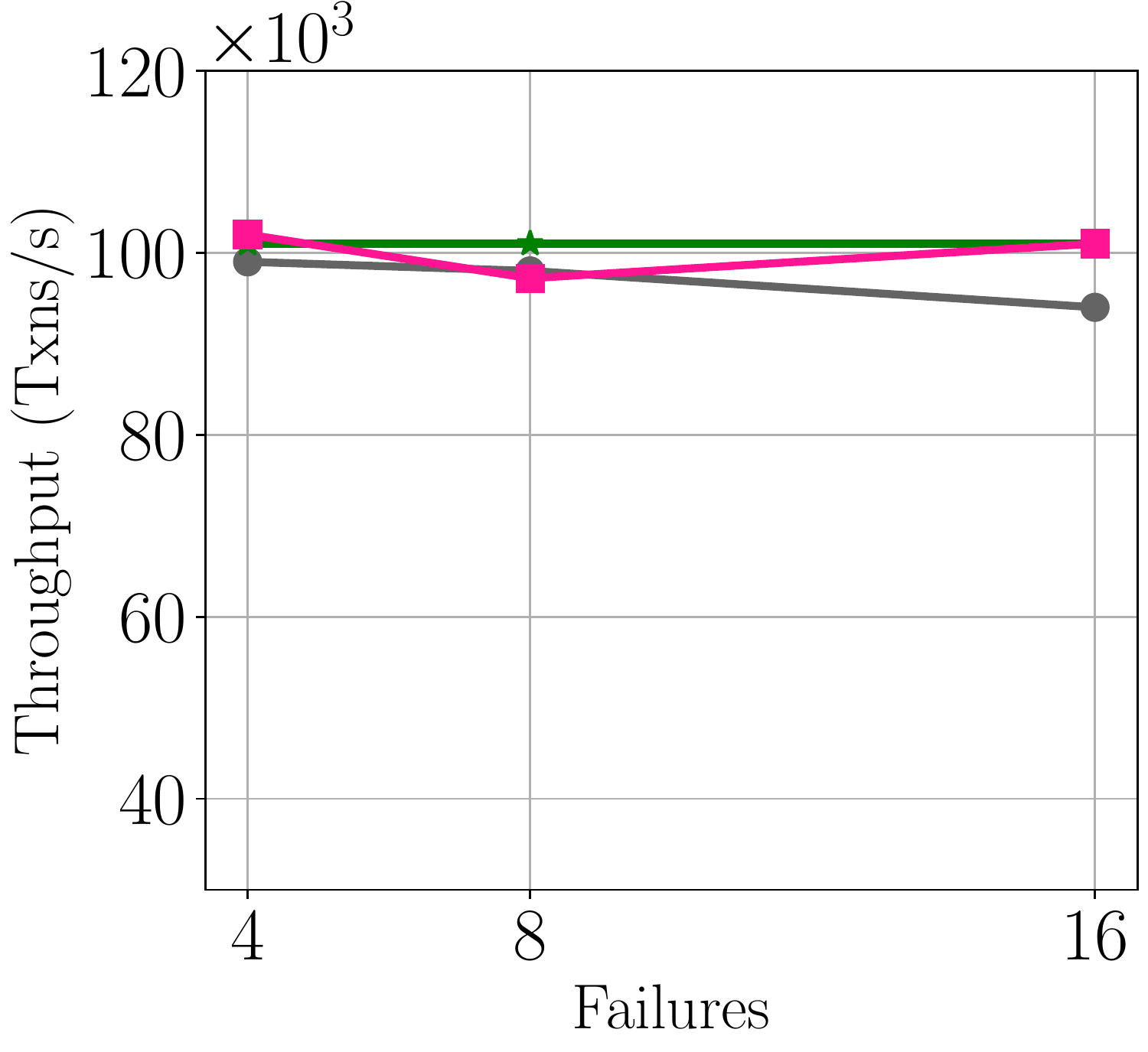}
	\label{fig:base-f-tps}
    \end{subfigure}
    \begin{subfigure}[b]{0.49\columnwidth}
        \centering
	\caption{Average Latency}
        \includegraphics[width=\textwidth]{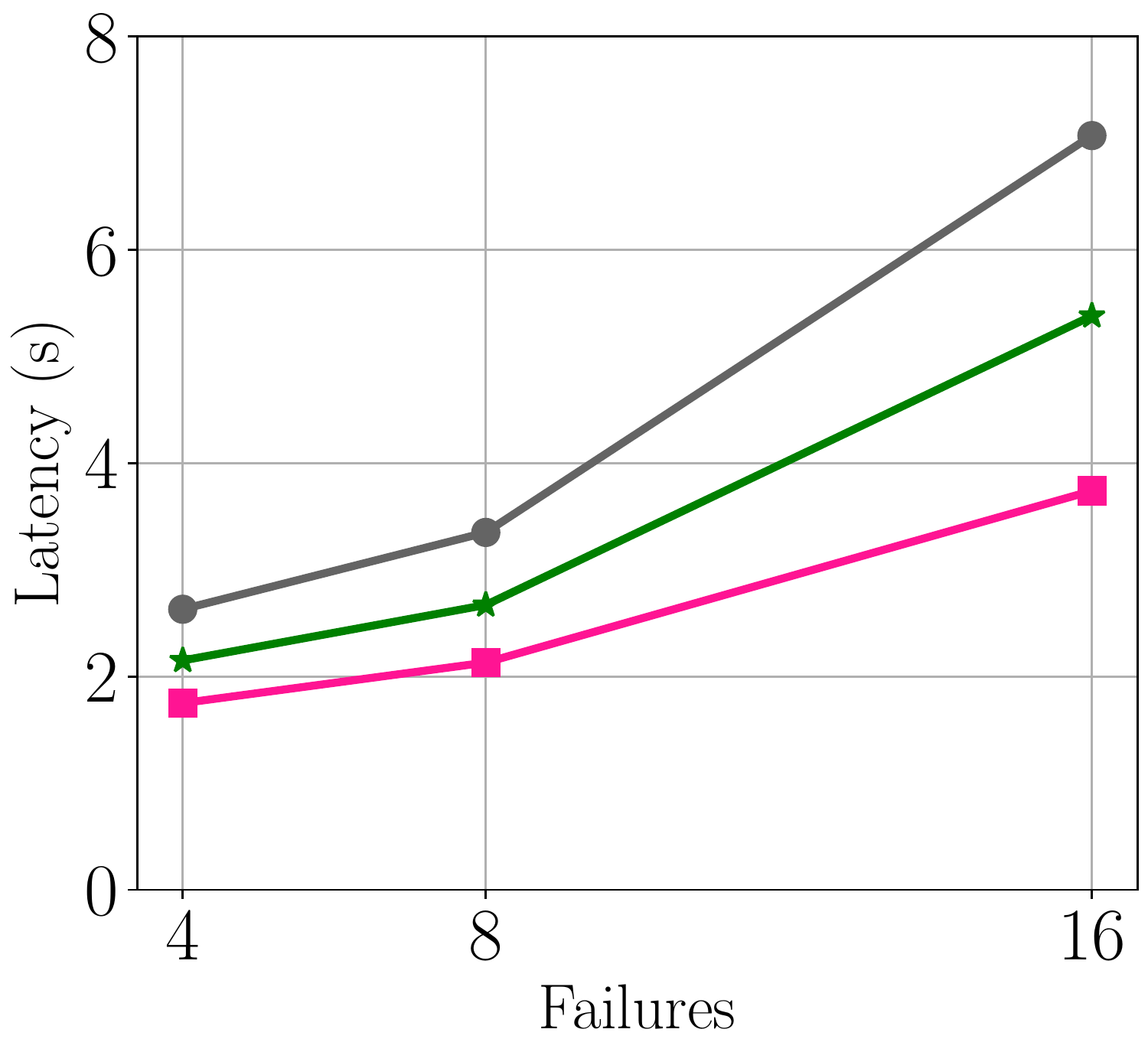}
	\label{fig:base-f-lat}
    \end{subfigure}
    \vspace{-10pt}
    \caption{Baseline performance under failures (N=50)}
    \label{fig:base-f-plots}
\end{figure}

\subsection{Baseline Performance}

First, we evaluate the performance of the Bullshark variants, namely Vanilla Bullshark, Vanilla Bullshark w/ Anchor Timeouts, and Baseline Bullshark, to align on a baseline performance to evaluate \sysname in the rest of the experiments. The results are in Figures~\ref{fig:base-nf-plots} and~\ref{fig:base-f-plots}. 

Figure~\ref{fig:base-nf-plots} shows the throughput and average latencies of the three Bullshark variants as the system size increases. The presence of timeouts in Vanilla Bullshark forces it to build the DAG slowly, which combined with the fact that fewer validators contribute vertices to the DAG when $N=10$, results in lower throughput than other variants, which have fewer or no timeouts. The latencies for Vanilla Bullshark is up to 88\% higher due to the timeouts. Interestingly, the latencies are similar for baseline Bullshark and Vanilla Bullshark w/o Vote timeout in the normal case because there is a trade-off between building a DAG at network-speed while skipping an anchor and waiting slightly longer for the anchor to be part of the votes. 

We also evaluated the vanilla variants and the baseline for $N=50$ and with varying the number of failures, in Figure~\ref{fig:base-f-plots}. We observe that Baseline Bullshark provides lower latency than other variants by virtue of being able to build the DAG at network speed skipping failed anchors and ordering using the alive ones. Therefore, in the rest of the section, we use Baseline Bullshark as the baseline to evaluate \sysname.

\begin{figure}[t]
    \includegraphics[width=\columnwidth]{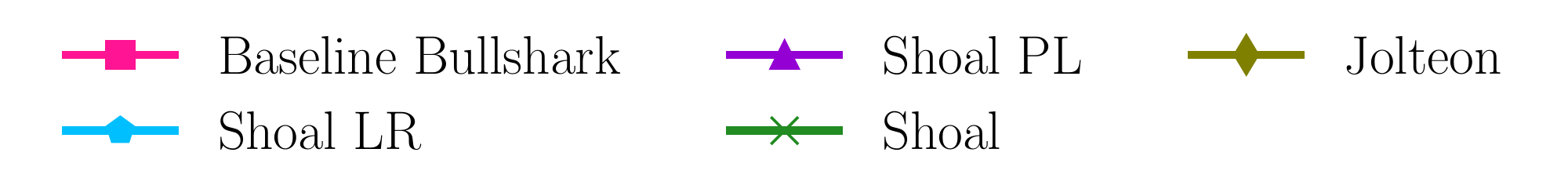}
    \begin{subfigure}[b]{0.49\columnwidth}
	\centering
	\caption{Throughput}
	\includegraphics[width=\textwidth]{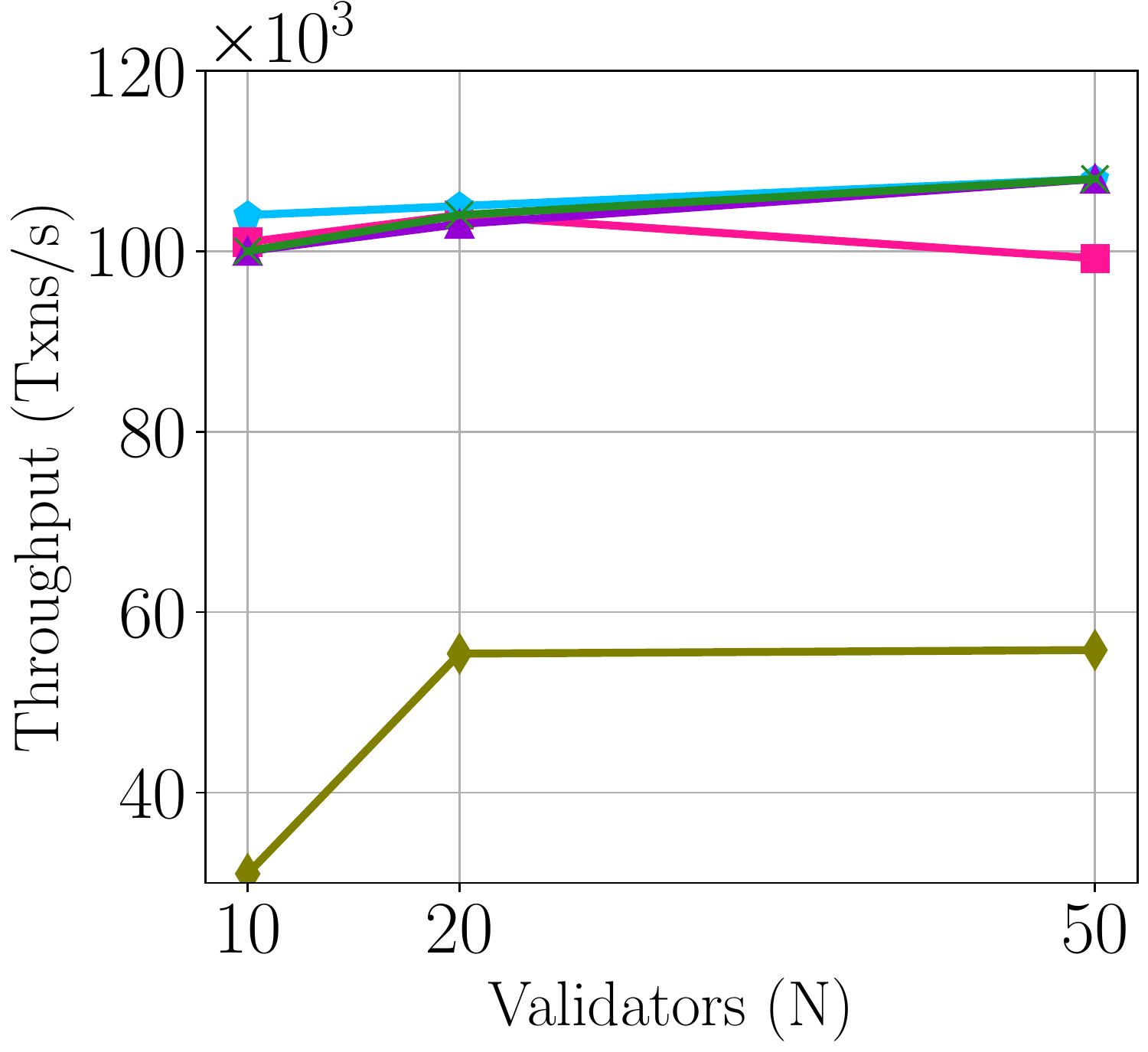}
	\label{fig:shoal-nf-tps}
    \end{subfigure}
    \begin{subfigure}[b]{0.49\columnwidth}
        \centering
	\caption{Average Latency}
        \includegraphics[width=\textwidth]{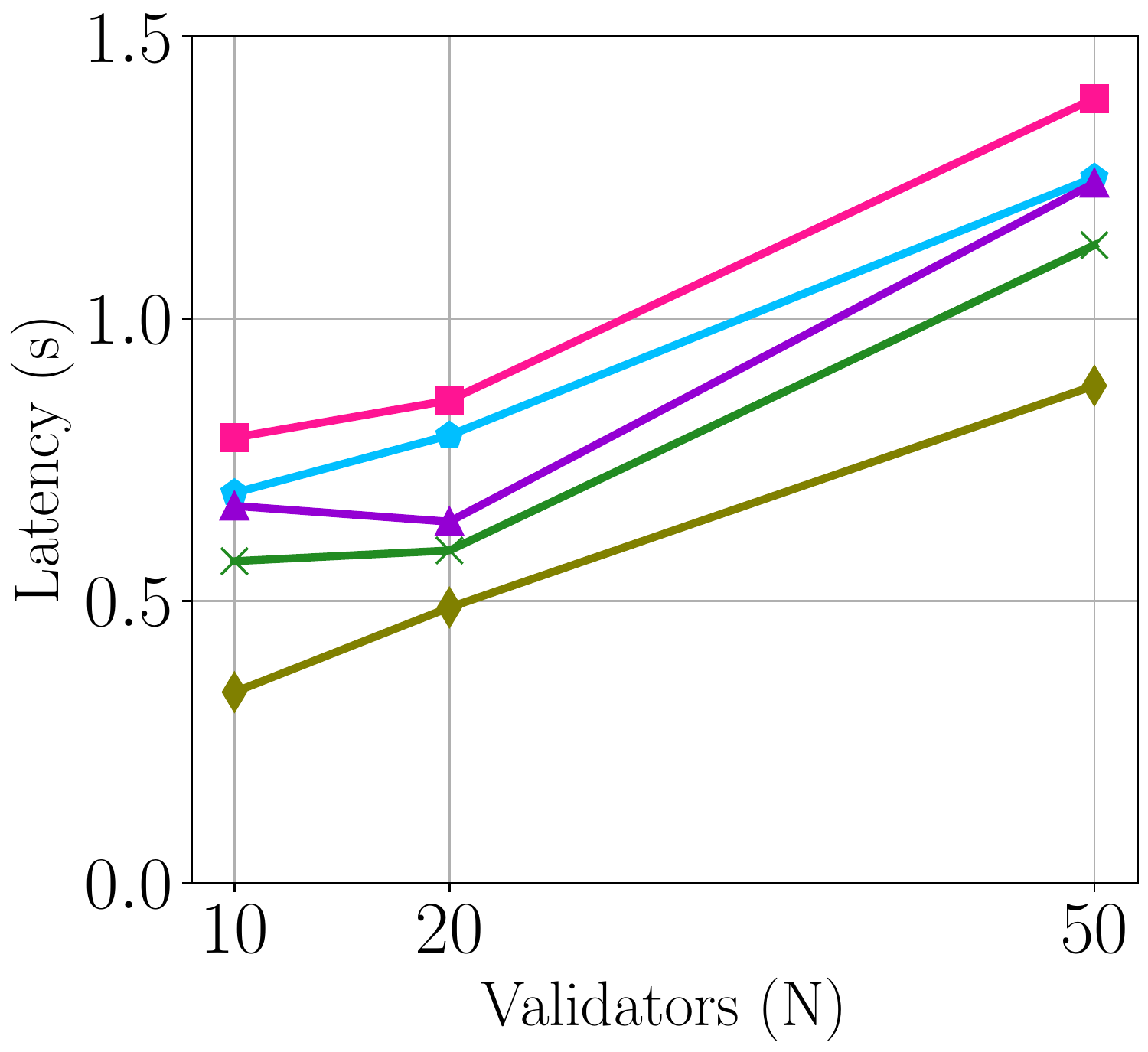}
	\label{fig:shoal-nf-lat}
    \end{subfigure}
    \begin{subfigure}[b]{0.49\columnwidth}
	\centering
	\caption{Vote-round Latency}
	\includegraphics[width=\textwidth]{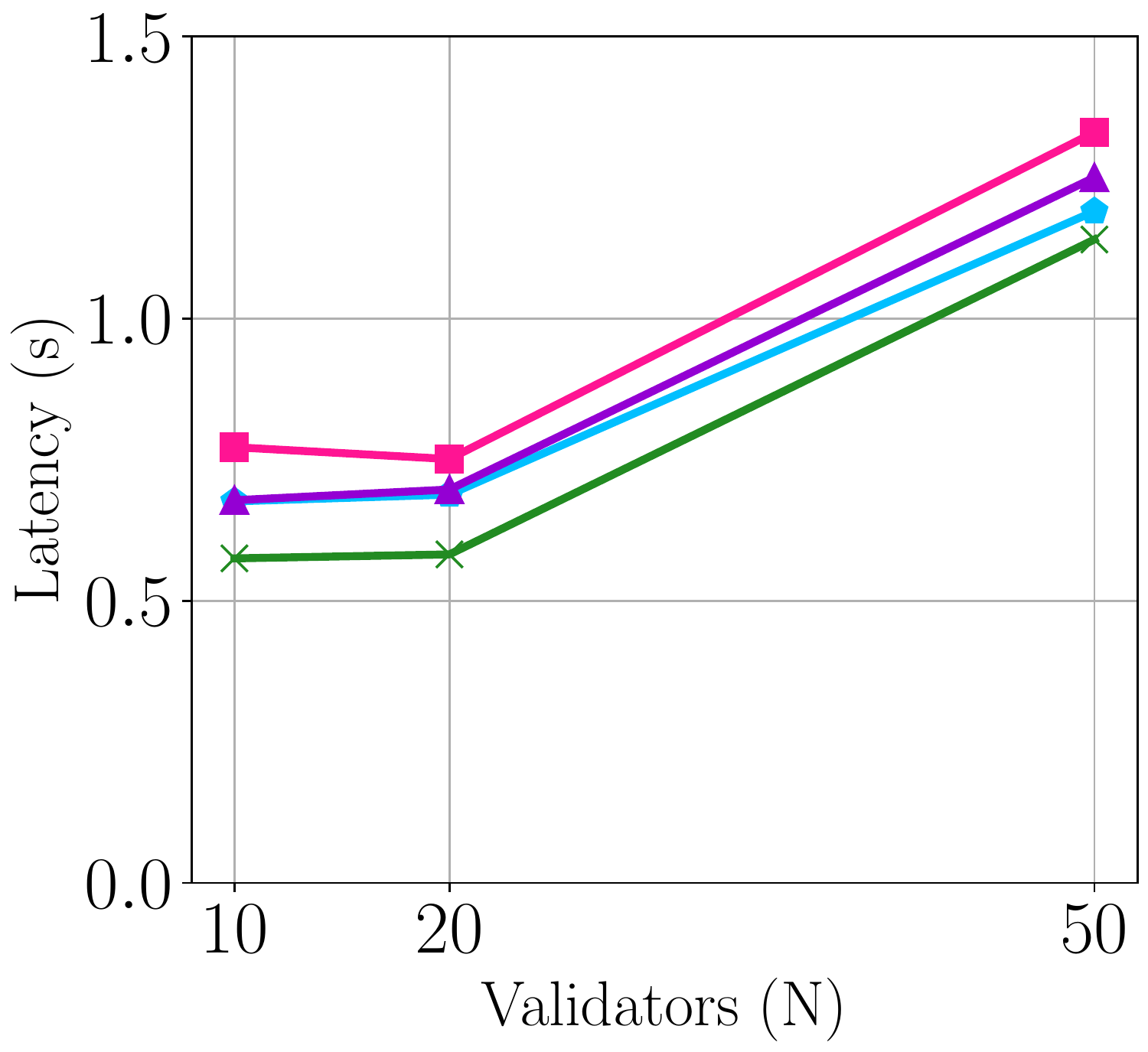}
	\label{fig:shoal-nf-lat-odd}
    \end{subfigure}
    \begin{subfigure}[b]{0.49\columnwidth}
	\centering
	\caption{Anchor-round Latency}
	\includegraphics[width=\textwidth]{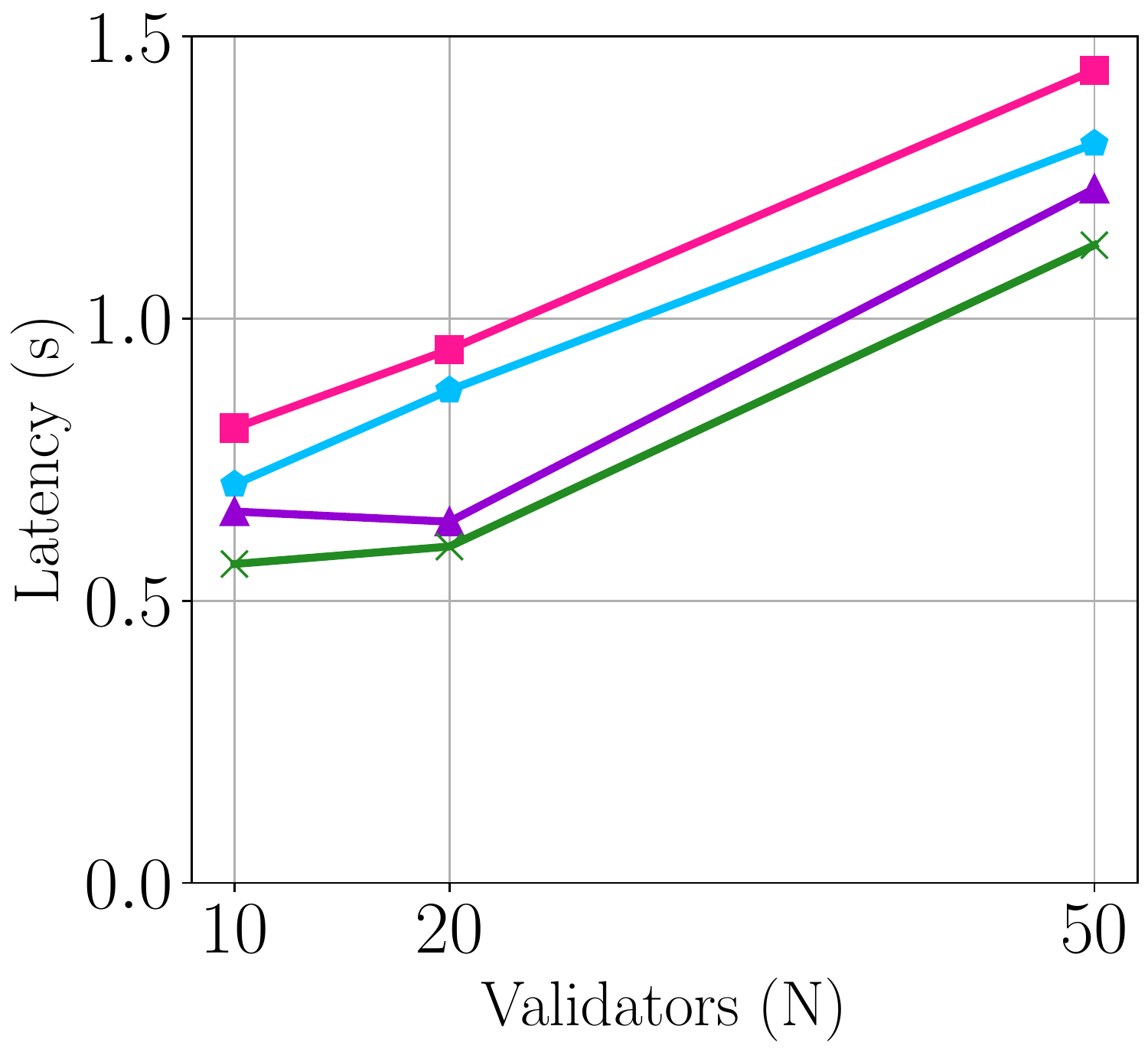}
	\label{fig:shoal-nf-lat-even}
    \end{subfigure}
    \caption{\sysname performance under no failures with 10, 20 and 50 validators.}
    \label{fig:shoal-nf-plots}
\end{figure}

\subsection{Performance of \sysname under fault-free case}

We now evaluate the \sysname variants against the baseline under the normal case where there are no failures. The results are in Figure~\ref{fig:shoal-nf-plots}. As expected, the throughput of the \sysname variants is similar as the number of validators increases. It can be observed that each variant of \sysname decreases the latency leading to full \sysname protocol. In summary, we observe that the \sysname's average latency decreases by up to 20\% compared to Baseline Bullshark.

On the other hand, Jolteon~\cite{jolteon}, despite its use Narwhal's data dissemination decoupling, is only able to achieve a peak throughput of less than 60k, about 40\% lower than \sysname.
This is because under high load leaders become the bottleneck again as they are not able to deal with the required network bandwidth, and as a result, unable to drive progress before timeouts expire.
Furthermore, in terms of latency, Jolteon is $\approx$50\% better than Vanilla Bullshark, but only $\approx$20\% better than \sysname.
Note that the latencies presented do not include the pre-step Quorum Store's latencies, because all the compared protocols include this optimization. However, in the case of \sysname, this latency can be avoided by merging Quorum Store into the DAG construction, as done in Narwhal, which will further close the latency gap from Jolteon.

In Figures~\ref{fig:shoal-f-lat-odd} and~\ref{fig:shoal-f-lat-even}, we distinguish the latencies of transactions in the vote-round vertices from that in anchor-round vertices, in order to show the effect of the pipelining approach. 
The vote and anchor round latencies for \sysname PL, as well as \sysname, are similar, which helps provide predictable and smooth latency for transactions in real production systems. In contrast, the vote and anchor round latencies for Baseline Bullshark and \sysname LR differ by 5-20\% depending on the number of failures.

\subsection{Performance of \sysname under faults}

\begin{figure}[t]
    \begin{subfigure}[b]{0.49\columnwidth}
	\centering
	\caption{Throughput}
	\includegraphics[width=\textwidth]{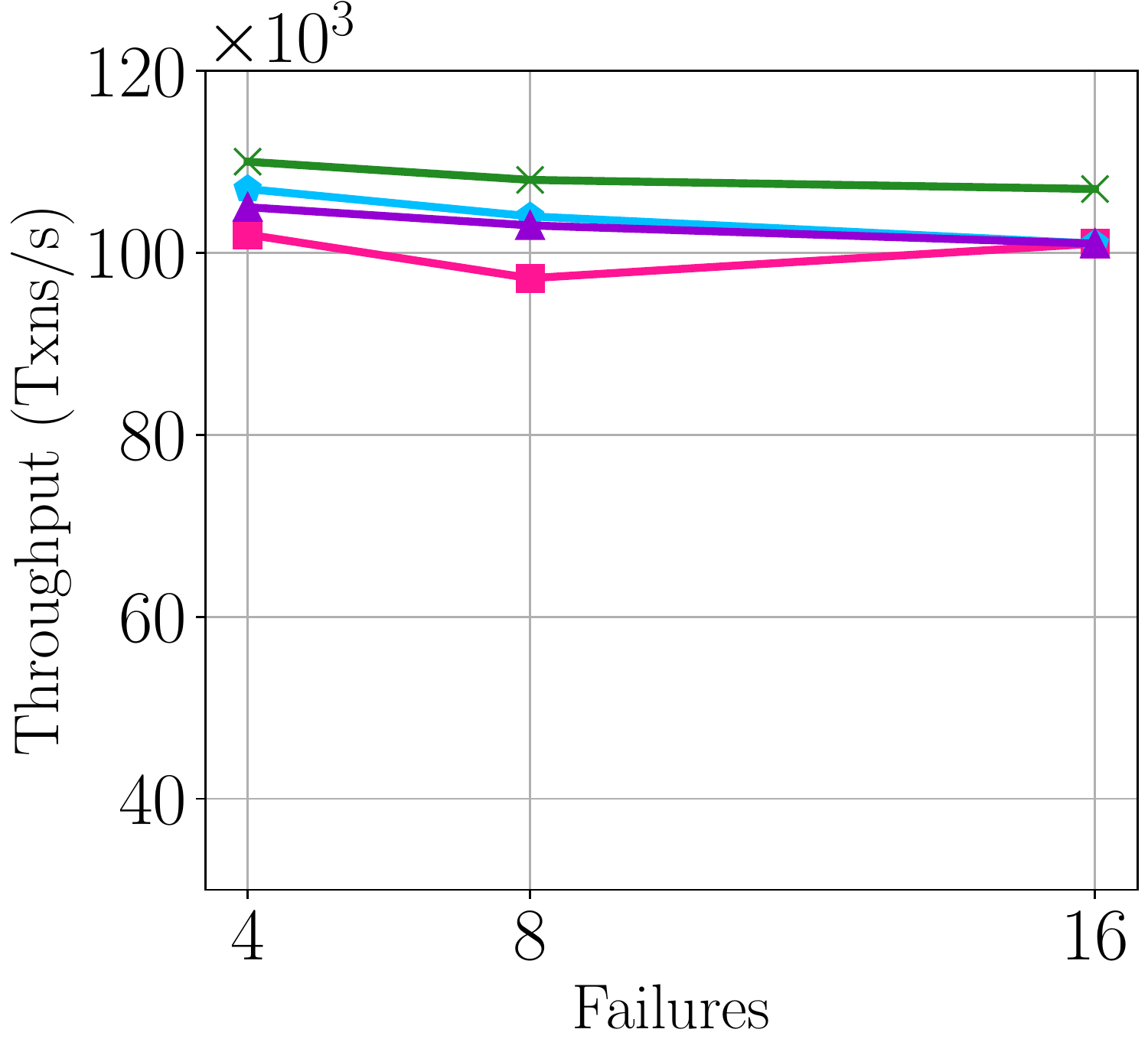}
	\label{fig:shoal-f-tps}
    \end{subfigure}
    \begin{subfigure}[b]{0.49\columnwidth}
        \centering
	\caption{Average Latency}
        \includegraphics[width=\textwidth]{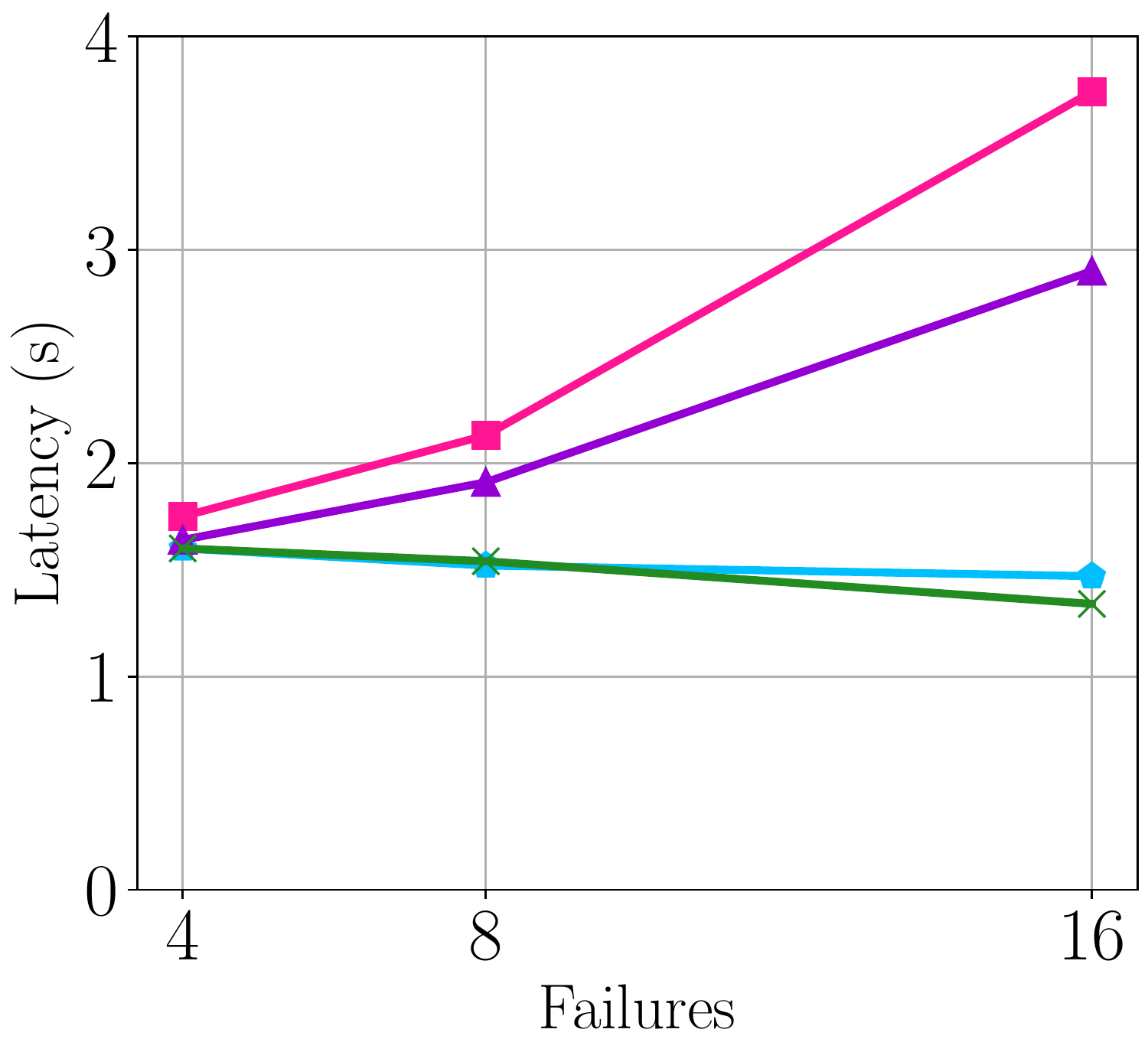}
	\label{fig:shoal-f-lat}
    \end{subfigure}
    \begin{subfigure}[b]{0.49\columnwidth}
	\centering
	\caption{Vote-round Latency}
	\includegraphics[width=\textwidth]{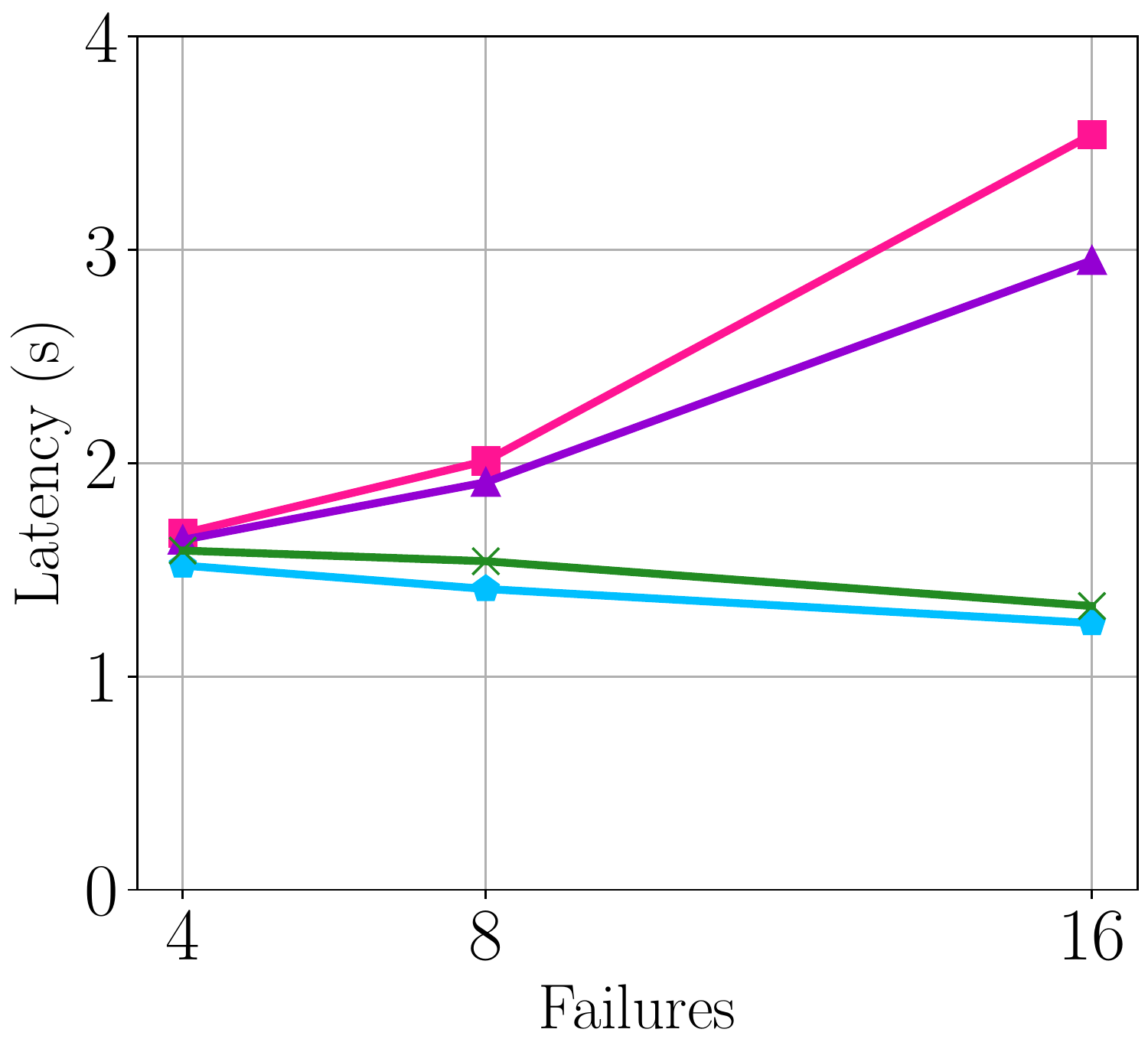}
	\label{fig:shoal-f-lat-odd}
    \end{subfigure}
    \begin{subfigure}[b]{0.49\columnwidth}
	\centering
	\caption{Anchor-round Latency}
	\includegraphics[width=\textwidth]{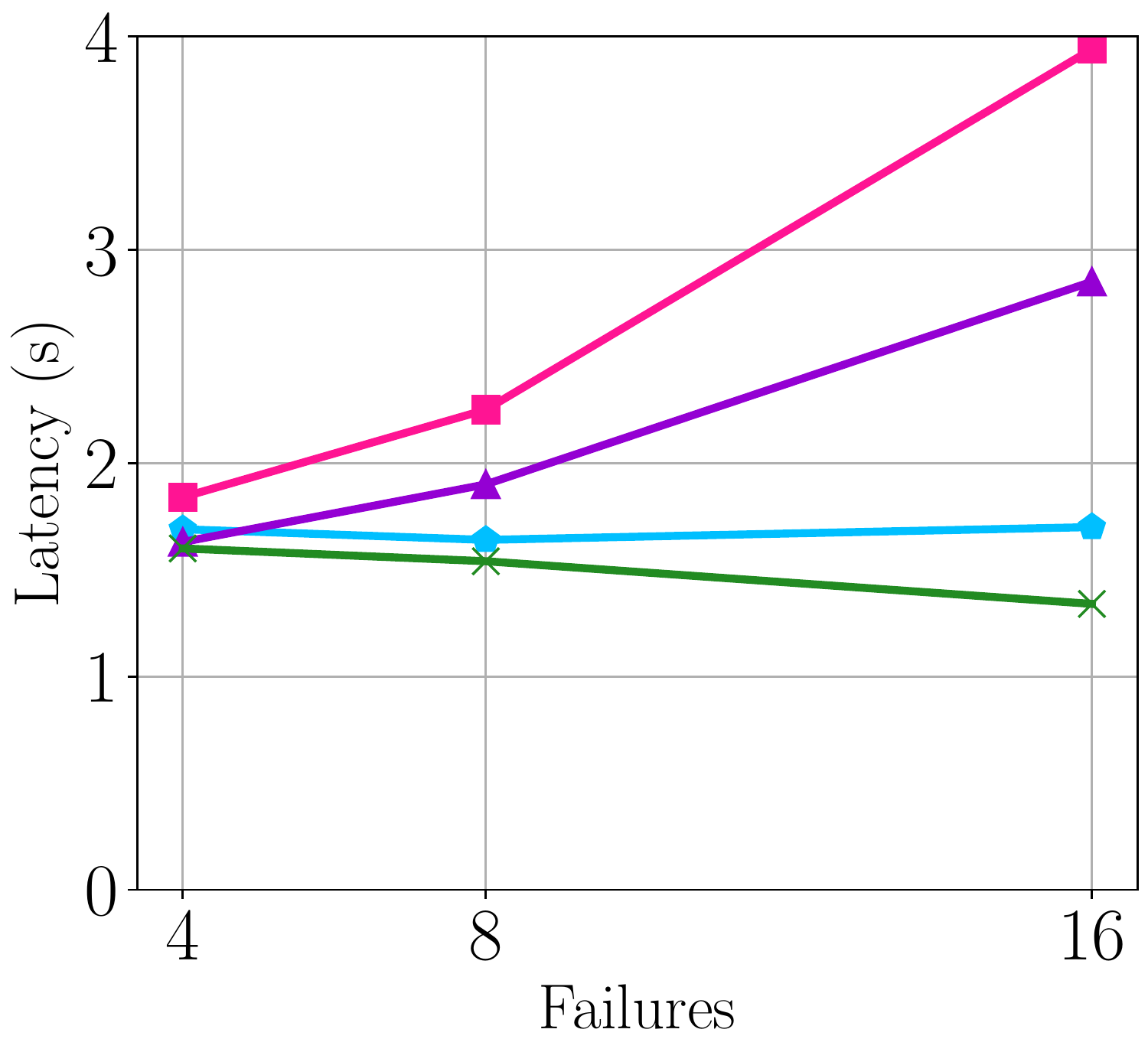}
	\label{fig:shoal-f-lat-even}
    \end{subfigure}
    \caption{\sysname performance under 4, 8, and 16 failures (N=50)}
    \label{fig:shoal-f-plots}
\end{figure}

Figure~\ref{fig:shoal-f-plots} shows the behavior of baseline and \sysname variants under faults. For this experiment, $N=50$ and the failures are increased from 4 to 16 (maximum tolerated). This is the case where the Leader Reputation mechanism helps to improve the latency significantly by reducing the likelihood of failed validators from being anchors.  
Notice that without Leader Reputation, the latencies of Baseline Bullshark and \sysname PL increases significantly as the number of failures increases. \sysname provides up to 65\% lower latencies than Baseline Bullshark under failures.

Figure~\ref{fig:failure-timeline} shows the impact of skipping leaders on the latency by comparing vanilla Bullshark with \sysname on a timeline plot under failures. We have a system of 50 validators, 8 of which have failed. The x-axis represents a part of the experiment time window and the y-axis shows the latency. 
The presence of timeouts and the need to skip anchors causes vanilla Bullshark's latency to fluctuate. In our experiment, we observed latency jitter of approximately one second, which makes it impossible to provide predictable latency in production systems. In constrast, \sysname maintains consistent low latency without any jitter.

\begin{figure}[t]
    \includegraphics[width=0.95\columnwidth]{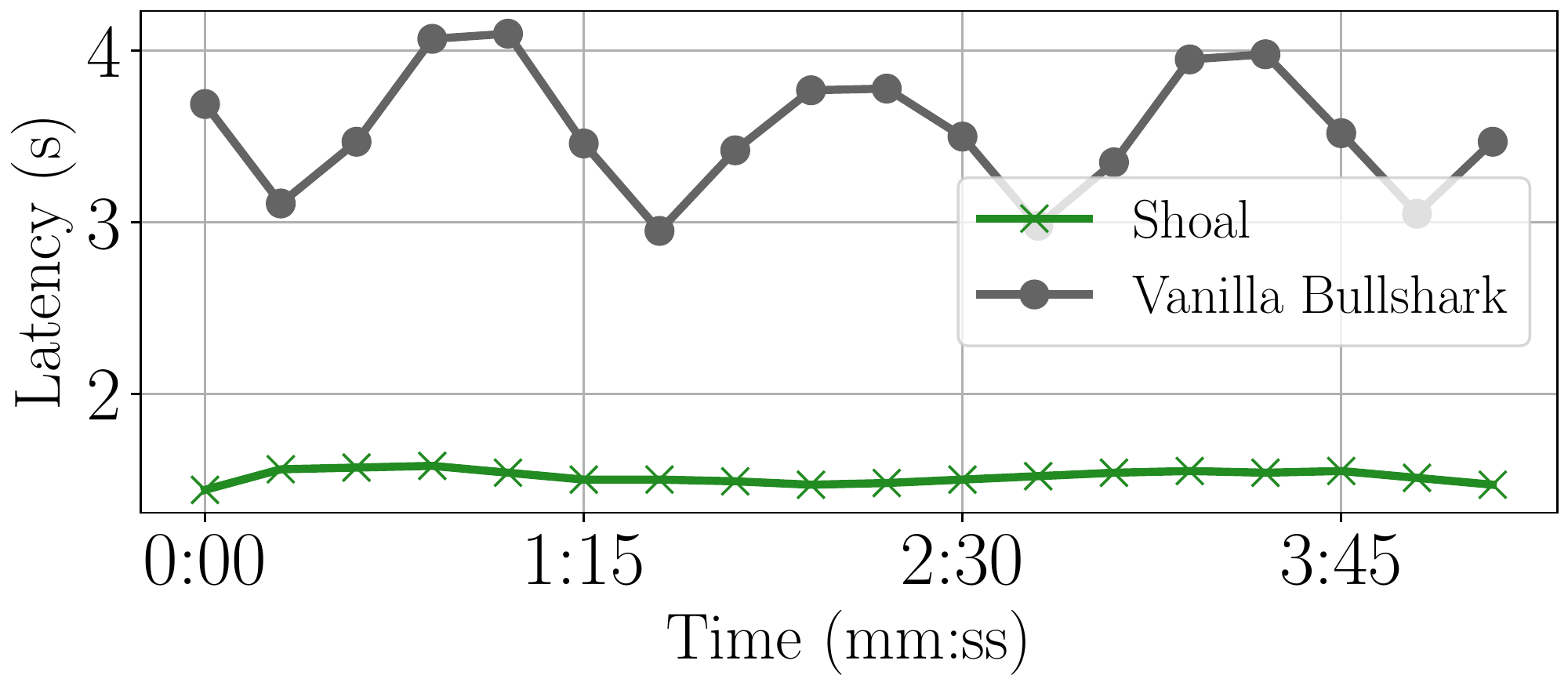}
    \caption{Latency timeline under 8 failures with $N=50$. The x-axis represents a part of the experiment time window and the y-axis shows the latency.}
    \label{fig:failure-timeline}
\end{figure}

\subsection{Summary}

In contrast to Vanilla Bullshark, \sysname provides up to 40\% lower latency in the fault-free case and up to 80\% lower latency under failures. Furthermore, we show that \sysname provides predictable latency and is able to commit at network speed in most cases and without waiting for timeouts.

% TBD

% We evaluated Bullsahrk. 

% Leader reputation: compare performance with crash failures (geo-distribution).

% Pipelining: compare best case (geo-distribution), crash failures (geo-distribution), and random delay case (local forge).

% Combination: crash failures (geo-distribution). Should give us the same numbers as Pipelining's best case.

% explain how timeouts slow us down -- we do not even now that some leaders are slow, but we wait fir them every time. 
\section{Related work} \label{sec:related}

\subsection{BFT systems for Blockchains}

Byzantine fault tolerance (BFT) has been an active area of research for over four decades, with a significant body of literature in both theory~\cite{cachin2011introduction} and systems~\cite{bessani2014state, castro2002practical, DBLP:conf/sosp/Abd-El-MalekGGRW05, zyzzyva}. With the advent of Blockchain systems in recent years, the focus on performance and scalability has notably increased.

Initial efforts to enhance throughput and scalability attempted to reduce the communication complexity of leader-based eventually synchronous protocols. This resulted in a considerable body of work aiming to achieve communication complexity linear to the number of validators~\cite{hotstuff, Lumi, bravo2022making, naor2020expected}. 
Despite sound theoretical premise, the practical implications arguably fell short of expectations.
An independent evaluation and comparison conducted by~\cite{bottlenecks} revealed that the well-known HotStuff~\cite{hotstuff} protocol achieved a throughput of only 3,500 TPS on a geo-replicated network.

The practical breakthrough occurred a few years later with the realization that the main bottleneck in BFT systems, particularly those relying on leaders, is data dissemination. Mir-BFT~\cite{MirBFT} introduced an innovative approach by running multiple PBFT~\cite{castro2002practical} instances in parallel. 
Independently, Narwhal~\cite{narwhaltusk} and later Dispersedledger~\cite{dispersedledger} decoupled data dissemination from the consensus logic. These advancements showcased impressive results, with Narwhal achieving a peak throughput of 160,000 TPS.

There has been systems~\cite{honeybadger, jolteon, beat, dumboccs} and theoretical~\cite{vaba, dumbopodc, cachin2005random} research in asynchronous BFT protocols. However, to the best of our knowledge, no asynchronous protocol is deployed in production in an industrial system. 
Another appealing property of Narwhal is the support of a partially synchronous~\cite{bullsharksync} as well as asynchronous~\cite{narwhaltusk, allyouneed, bullshark} (as long as randomness is available) protocols, and the ability to easily switch among them.

\subsection{Timeouts and responsiveness}
The FLP~\cite{flp} impossibility result states that there is no deterministic consensus protocol that can tolerate a fully asynchronous network.
The proof relies on the fact that it is impossible to distinguish between crashed and slow validators during asynchronous periods.
The immediate application to partially synchronous networks, therefore, is that all deterministic protocols must rely on timeouts in some way to guarantee liveness against a worst-case adversary.
Indeed, to the best of our knowledge, all previous deterministic BFT protocols, including the partially synchronous version of Bullshark~\cite{bullsharksync}, relied on timeouts to implement a simple version of a failure detector~\cite{chandra1996weakest}.
This mechanism monitors the leaders and triggers view-changes when timeouts expire, i.e. when faults are suspected.

The optimistic responsiveness property, popularized by HotStuff~\cite{hotstuff}, avoids timeouts in the best-case failure-free scenario. 
However, when failures do occur, all validators wait until the timeout expires before view-changing to the next leader, introducing a significant slowdown in the protocol execution.
Moreover, as discussed in Section~\ref{sec:implementation}, setting a proper timeout duration is a non-trivial problem in its own right.

\sysname provides prevalent responsiveness, which is a strictly better property than optimistic responsiveness as it guarantees network speed progress in case of healthy leaders and zero delays in case of failures.
\sysname achieves this by relying on the network speed ``clock" inherent in the DAG construction itself~\cite{fordclock}, combined with the leader reputation mechanism.
While due to the FLP result, the worst case in which a timeout would be required for maintaining the liveness of the protocol cannot completely be eliminated, \sysname successfully relegates such cases to occur in specific extremely uncommon scenarios from a practical point of view (multiple consecutive unordered anchors).

\subsection{DAG-based BFT}
DAG-based consensus in the context of BFT was first proposed by HashGraph~\cite{hashgraph}. The idea is to separate the network communication layer, i.e. efficiently constructing a system that forms a DAG of messages, and the consensus logic that can involve complex pieces such as view-change and view-synchronization.
The consensus logic is performed locally, whereby a validator examines its local view of the DAG and orders the vertices without sending any messages.
The challenge arises from the asynchronous nature of the network, which may cause different validators to observe slightly different portions of the DAG. To address this, the DAG structure is interpreted as a consensus protocol, wherein a vertex represents a proposal and an edge represents a vote.

Aleph~\cite{aleph} introduced a round-based DAG structure. Such a structure simplifies support for garbage collection and non-equivocation, which in turn simplifies the consensus logic to order the vertices. 
Narwhal implements round-based DAG, and three Narwhal-based consensus protocols have been previously proposed. The first is DAG-Rider~\cite{allyouneed}, which introduced a quantum-safe asynchronous protocol with optimal amortized communication complexity and $O(1)$ latency. Tusk~\cite{narwhaltusk} improved latency in the best case. An asynchronous version of Bullshark~\cite{bullshark, bullsharksync} includes a fast path~\cite{bullshark}, while a stand-alone partially synchronous protocol~\cite{bullsharksync} also exists and is currently deployed in production in Sui~\cite{sui}. \sysname presents a framework that applies to all Narwhal-based protocols, enhancing their latency through a more efficient ordering rule and a leader reputation mechanism.

An orthogonal theoretical effort~\cite{keidar2022cordial} trades off the non-equivocation property of the DAG construction (which typically requires reliable broadcast), as well as the separation from the consensus logic, in order to reduce latency.

\subsection{Pipelining}
To the best of our knowledge, pipelining in the BFT context was first proposed by Tendermint~\cite{buchman2018latest}, and later utilized in HotStuff~\cite{hotstuff} and Diem~\cite{diembft}. 
State machine replication (SMR) systems can be constructed from multiple instances of single-shot consensus~\cite{lamport2019part}, e.g. one approach to build Byzantine SMR is by running a PBFT instance~\cite{castro2002practical} for each slot.
Tendermint introduced the elegant idea of chaining proposals or piggybacking single-shot instances such that a value for a new slot could be proposed before the value for the previous slot was committed. In this approach, a message in the $i^{th}$ round of the $k^{th}$ instance can be interpreted as a message in round $i-1$ of instance $k+1$. While the latency for each instance remains unchanged, clients experience improved latency as their transactions can be proposed earlier.

In DAG-based consensus, the concept of piggybacking proposals is inherent in the design, as each vertex in the DAG links to vertices in previous rounds. However, previous protocols did not allow having an anchor in every round. 
%resulting in some vertices requiring more rounds to be ordered compared to others, as they are further from an anchor.
\sysname framework supports having an anchor in each round in a good case for any Narwhal-based protocol, providing a "pipelining effect".

\subsection{Leader reputation}
Leader reputation is often overlooked in theory, yet it plays a crucial role in performance in practice. 
While Byzantine failures are rare as validators are highly protected, isolated, and economically incentivized to follow the protocol, more common are validators that are unresponsive.
This may be because they temporarily crashed, running slow hardware, or are simply located farther away.
If a leader/anchor election is done naively, unresponsive validators will unavoidably stall progress and lead to significant performance impact.

A practical approach, implemented in Diem~\cite{diembft} and formalized in~\cite{cohen2022aware}, is to exclude underperforming validators from leader election. This is achieved by updating the set of candidates after every committed block based on the recent activity of validators. In a chained protocol, if all validators observe the same committed block, they can deterministically elect future leaders based on the information in the chain. However, in some cases, certain validators may see a commit certificate for a block earlier than others. This can lead to disagreements among validators regarding the list of next leaders, causing a temporary loss of liveness.

For DAG-based protocols, disagreements on the identity of round leaders can lead the validators to order the DAG completely differently. This poses a challenge for implementing leader reputation on the DAG. As evidence, a Narwhal and Bullshark implementation currently deployed in production in Sui blockchain does not support such a feature~\footnote{github.com/MystenLabs/sui/blob/main/narwhal/consensus/src/bullshark.rs}. \sysname enables leader reputation in Narwhal-based BFT protocols without any additional overhead.

%\balance
\section{Discussion} \label{sec:conclusion}
\sysname can be instantiated with any Narwhal-based consensus protocol, and can even switch between protocols during the DAG retrospective re-interpretation step.

\sysname uniformizes the latency and throughput across the validators and eliminates the use of timeouts except in very rare cases, which contributes to the robustness and performance of the system. 
Predictable and smooth latency and throughput patterns have major practical benefits for real systems. It facilitates setting up effective monitoring and alerts for anomaly detection. This is crucial for ensuring security and quality of service by enabling timely response and any intervention necessary, be it manual or automated. Predictable consensus throughput also facilitates pipelining the ordering of transactions with other components of the Blockchain, e.g. transaction execution and commit.

\sysname satisfies the property we name prevalent responsiveness, ensuring the worst-case executions that must use timeouts due to the FLP impossibility result are aligned with the improbable (and worst-case) scenarios from the practical standpoint. Moreover, the design without timeouts plays into the strengths of the leader reputation mechanism of \sysname, and as a result, provides further latency improvements.

\ifdefined\cameraReady
%\section*{Acknowledgments}

\fi

\bibliographystyle{ACM-Reference-Format}
\bibliography{references}

%%% -*-BibTeX-*-
%%% Do NOT edit. File created by BibTeX with style
%%% ACM-Reference-Format-Journals [18-Jan-2012].

\begin{thebibliography}{46}

%%% ====================================================================
%%% NOTE TO THE USER: you can override these defaults by providing
%%% customized versions of any of these macros before the \bibliography
%%% command.  Each of them MUST provide its own final punctuation,
%%% except for \shownote{}, \showDOI{}, and \showURL{}.  The latter two
%%% do not use final punctuation, in order to avoid confusing it with
%%% the Web address.
%%%
%%% To suppress output of a particular field, define its macro to expand
%%% to an empty string, or better, \unskip, like this:
%%%
%%% \newcommand{\showDOI}[1]{\unskip}   % LaTeX syntax
%%%
%%% \def \showDOI #1{\unskip}           % plain TeX syntax
%%%
%%% ====================================================================

\ifx \showCODEN    \undefined \def \showCODEN     #1{\unskip}     \fi
\ifx \showDOI      \undefined \def \showDOI       #1{#1}\fi
\ifx \showISBNx    \undefined \def \showISBNx     #1{\unskip}     \fi
\ifx \showISBNxiii \undefined \def \showISBNxiii  #1{\unskip}     \fi
\ifx \showISSN     \undefined \def \showISSN      #1{\unskip}     \fi
\ifx \showLCCN     \undefined \def \showLCCN      #1{\unskip}     \fi
\ifx \shownote     \undefined \def \shownote      #1{#1}          \fi
\ifx \showarticletitle \undefined \def \showarticletitle #1{#1}   \fi
\ifx \showURL      \undefined \def \showURL       {\relax}        \fi
% The following commands are used for tagged output and should be
% invisible to TeX
\providecommand\bibfield[2]{#2}
\providecommand\bibinfo[2]{#2}
\providecommand\natexlab[1]{#1}
\providecommand\showeprint[2][]{arXiv:#2}

\bibitem[Abd{-}El{-}Malek et~al\mbox{.}(2005)]%
        {DBLP:conf/sosp/Abd-El-MalekGGRW05}
\bibfield{author}{\bibinfo{person}{Michael Abd{-}El{-}Malek},
  \bibinfo{person}{Gregory~R. Ganger}, \bibinfo{person}{Garth~R. Goodson},
  \bibinfo{person}{Michael~K. Reiter}, {and} \bibinfo{person}{Jay~J. Wylie}.}
  \bibinfo{year}{2005}\natexlab{}.
\newblock \showarticletitle{Fault-scalable Byzantine fault-tolerant services}.
  In \bibinfo{booktitle}{\emph{Proceedings of the 20th {ACM} Symposium on
  Operating Systems Principles 2005, {SOSP} 2005, Brighton, UK, October 23-26,
  2005}}, \bibfield{editor}{\bibinfo{person}{Andrew Herbert} {and}
  \bibinfo{person}{Kenneth~P. Birman}} (Eds.). \bibinfo{publisher}{{ACM}},
  \bibinfo{pages}{59--74}.
\newblock


\bibitem[Abraham et~al\mbox{.}(2019)]%
        {vaba}
\bibfield{author}{\bibinfo{person}{Ittai Abraham}, \bibinfo{person}{Dahlia
  Malkhi}, {and} \bibinfo{person}{Alexander Spiegelman}.}
  \bibinfo{year}{2019}\natexlab{}.
\newblock \showarticletitle{Asymptotically Optimal Validated Asynchronous
  Byzantine Agreement}. In \bibinfo{booktitle}{\emph{Proceedings of the 2019
  {ACM} Symposium on Principles of Distributed Computing, {PODC} 2019, Toronto,
  ON, Canada, July 29 - August 2, 2019}},
  \bibfield{editor}{\bibinfo{person}{Peter Robinson} {and}
  \bibinfo{person}{Faith Ellen}} (Eds.). \bibinfo{publisher}{{ACM}},
  \bibinfo{pages}{337--346}.
\newblock


\bibitem[Alqahtani and Demirbas(2021)]%
        {bottlenecks}
\bibfield{author}{\bibinfo{person}{Salem Alqahtani} {and}
  \bibinfo{person}{Murat Demirbas}.} \bibinfo{year}{2021}\natexlab{}.
\newblock \showarticletitle{Bottlenecks in Blockchain Consensus Protocols}.
\newblock \bibinfo{journal}{\emph{CoRR}}  \bibinfo{volume}{abs/2103.04234}
  (\bibinfo{year}{2021}).
\newblock


\bibitem[Arun and Ravindran(2022)]%
        {arun2022dqbft}
\bibfield{author}{\bibinfo{person}{Balaji Arun} {and} \bibinfo{person}{Binoy
  Ravindran}.} \bibinfo{year}{2022}\natexlab{}.
\newblock \showarticletitle{Scalable Byzantine Fault Tolerance via Partial
  Decentralization}.
\newblock \bibinfo{journal}{\emph{Proc. VLDB Endow.}} \bibinfo{volume}{15},
  \bibinfo{number}{9} (\bibinfo{date}{may} \bibinfo{year}{2022}),
  \bibinfo{pages}{1739–1752}.
\newblock
\showISSN{2150-8097}
\urldef\tempurl%
\url{https://doi.org/10.14778/3538598.3538599}
\showDOI{\tempurl}


\bibitem[Baird(2016)]%
        {hashgraph}
\bibfield{author}{\bibinfo{person}{Leemon Baird}.}
  \bibinfo{year}{2016}\natexlab{}.
\newblock \showarticletitle{The swirlds hashgraph consensus algorithm: Fair,
  fast, byzantine fault tolerance}.
\newblock \bibinfo{journal}{\emph{Swirlds Tech Reports SWIRLDS-TR-2016-01,
  Tech. Rep}} (\bibinfo{year}{2016}).
\newblock


\bibitem[Ben-Or(1983)]%
        {ben1983another}
\bibfield{author}{\bibinfo{person}{Michael Ben-Or}.}
  \bibinfo{year}{1983}\natexlab{}.
\newblock \showarticletitle{Another advantage of free choice (extended
  abstract) completely asynchronous agreement protocols}. In
  \bibinfo{booktitle}{\emph{Proceedings of the second annual ACM symposium on
  Principles of distributed computing}}. \bibinfo{pages}{27--30}.
\newblock


\bibitem[Bessani et~al\mbox{.}(2014)]%
        {bessani2014state}
\bibfield{author}{\bibinfo{person}{Alysson Bessani}, \bibinfo{person}{Joao
  Sousa}, {and} \bibinfo{person}{Eduardo~EP Alchieri}.}
  \bibinfo{year}{2014}\natexlab{}.
\newblock \showarticletitle{State machine replication for the masses with
  BFT-SMART}. In \bibinfo{booktitle}{\emph{2014 44th Annual IEEE/IFIP
  International Conference on Dependable Systems and Networks}}. IEEE,
  \bibinfo{pages}{355--362}.
\newblock


\bibitem[Boneh et~al\mbox{.}(2001)]%
        {bls2001}
\bibfield{author}{\bibinfo{person}{Dan Boneh}, \bibinfo{person}{Benjamin Lynn},
  {and} \bibinfo{person}{Hovav Shacham}.} \bibinfo{year}{2001}\natexlab{}.
\newblock \showarticletitle{Short signatures from the {Weil} pairing}. In
  \bibinfo{booktitle}{\emph{Advances in Cryptology - {CRYPTO} 2001}}
  \emph{(\bibinfo{series}{Lecture Notes in Computer Science},
  Vol.~\bibinfo{volume}{2139})}. \bibinfo{publisher}{Springer},
  \bibinfo{pages}{514--532}.
\newblock
\urldef\tempurl%
\url{https://doi.org/10.1007/3-540-44647-8_32}
\showDOI{\tempurl}


\bibitem[Bravo et~al\mbox{.}(2022)]%
        {bravo2022making}
\bibfield{author}{\bibinfo{person}{Manuel Bravo}, \bibinfo{person}{Gregory
  Chockler}, {and} \bibinfo{person}{Alexey Gotsman}.}
  \bibinfo{year}{2022}\natexlab{}.
\newblock \showarticletitle{Making byzantine consensus live}.
\newblock \bibinfo{journal}{\emph{Distributed Computing}} \bibinfo{volume}{35},
  \bibinfo{number}{6} (\bibinfo{year}{2022}), \bibinfo{pages}{503--532}.
\newblock


\bibitem[Buchman et~al\mbox{.}(2018)]%
        {buchman2018latest}
\bibfield{author}{\bibinfo{person}{Ethan Buchman}, \bibinfo{person}{Jae Kwon},
  {and} \bibinfo{person}{Zarko Milosevic}.} \bibinfo{year}{2018}\natexlab{}.
\newblock \showarticletitle{The latest gossip on BFT consensus}.
\newblock \bibinfo{journal}{\emph{arXiv preprint arXiv:1807.04938}}
  (\bibinfo{year}{2018}).
\newblock


\bibitem[Cachin et~al\mbox{.}(2011)]%
        {cachin2011introduction}
\bibfield{author}{\bibinfo{person}{Christian Cachin}, \bibinfo{person}{Rachid
  Guerraoui}, {and} \bibinfo{person}{Lu{\'\i}s Rodrigues}.}
  \bibinfo{year}{2011}\natexlab{}.
\newblock \bibinfo{booktitle}{\emph{Introduction to reliable and secure
  distributed programming}}.
\newblock \bibinfo{publisher}{Springer Science \& Business Media}.
\newblock


\bibitem[Cachin et~al\mbox{.}(2005)]%
        {cachin2005random}
\bibfield{author}{\bibinfo{person}{Christian Cachin}, \bibinfo{person}{Klaus
  Kursawe}, {and} \bibinfo{person}{Victor Shoup}.}
  \bibinfo{year}{2005}\natexlab{}.
\newblock \showarticletitle{Random oracles in Constantinople: Practical
  asynchronous Byzantine agreement using cryptography}.
\newblock \bibinfo{journal}{\emph{Journal of Cryptology}} \bibinfo{volume}{18},
  \bibinfo{number}{3} (\bibinfo{year}{2005}), \bibinfo{pages}{219--246}.
\newblock


\bibitem[Castro and Liskov(2002)]%
        {castro2002practical}
\bibfield{author}{\bibinfo{person}{Miguel Castro} {and}
  \bibinfo{person}{Barbara Liskov}.} \bibinfo{year}{2002}\natexlab{}.
\newblock \showarticletitle{Practical Byzantine fault tolerance and proactive
  recovery}.
\newblock \bibinfo{journal}{\emph{ACM Transactions on Computer Systems (TOCS)}}
  \bibinfo{volume}{20}, \bibinfo{number}{4} (\bibinfo{year}{2002}),
  \bibinfo{pages}{398--461}.
\newblock


\bibitem[Chandra et~al\mbox{.}(1996)]%
        {chandra1996weakest}
\bibfield{author}{\bibinfo{person}{Tushar~Deepak Chandra},
  \bibinfo{person}{Vassos Hadzilacos}, {and} \bibinfo{person}{Sam Toueg}.}
  \bibinfo{year}{1996}\natexlab{}.
\newblock \showarticletitle{The weakest failure detector for solving
  consensus}.
\newblock \bibinfo{journal}{\emph{Journal of the ACM (JACM)}}
  \bibinfo{volume}{43}, \bibinfo{number}{4} (\bibinfo{year}{1996}),
  \bibinfo{pages}{685--722}.
\newblock


\bibitem[Cho and Spiegelman(2023)]%
        {quorum-store}
\bibfield{author}{\bibinfo{person}{Brian Cho} {and} \bibinfo{person}{Alexander
  Spiegelman}.} \bibinfo{year}{2023}\natexlab{}.
\newblock \bibinfo{title}{{Quorum Store: How Consensus Horizontally Scales on
  the Aptos Blockchain}}.
\newblock
\newblock
\urldef\tempurl%
\url{https://medium.com/aptoslabs/quorum-store-how-consensus-horizontally-scales-on-the-aptos-blockchain-988866f6d5b0}
\showURL{%
\tempurl}
\newblock
\shownote{Aptos Labs}.


\bibitem[Cohen et~al\mbox{.}(2022)]%
        {cohen2022aware}
\bibfield{author}{\bibinfo{person}{Shir Cohen}, \bibinfo{person}{Rati
  Gelashvili}, \bibinfo{person}{Lefteris~Kokoris Kogias},
  \bibinfo{person}{Zekun Li}, \bibinfo{person}{Dahlia Malkhi},
  \bibinfo{person}{Alberto Sonnino}, {and} \bibinfo{person}{Alexander
  Spiegelman}.} \bibinfo{year}{2022}\natexlab{}.
\newblock \showarticletitle{Be aware of your leaders}. In
  \bibinfo{booktitle}{\emph{Financial Cryptography and Data Security: 26th
  International Conference, FC 2022, Grenada, May 2--6, 2022, Revised Selected
  Papers}}. Springer, \bibinfo{pages}{279--295}.
\newblock


\bibitem[Danezis et~al\mbox{.}(2022)]%
        {narwhaltusk}
\bibfield{author}{\bibinfo{person}{George Danezis}, \bibinfo{person}{Lefteris
  Kokoris-Kogias}, \bibinfo{person}{Alberto Sonnino}, {and}
  \bibinfo{person}{Alexander Spiegelman}.} \bibinfo{year}{2022}\natexlab{}.
\newblock \showarticletitle{Narwhal and tusk: a dag-based mempool and efficient
  bft consensus}. In \bibinfo{booktitle}{\emph{Proceedings of the Seventeenth
  European Conference on Computer Systems}}. \bibinfo{pages}{34--50}.
\newblock


\bibitem[Duan et~al\mbox{.}(2018)]%
        {beat}
\bibfield{author}{\bibinfo{person}{Sisi Duan}, \bibinfo{person}{Michael~K.
  Reiter}, {and} \bibinfo{person}{Haibin Zhang}.}
  \bibinfo{year}{2018}\natexlab{}.
\newblock \showarticletitle{{BEAT:} Asynchronous {BFT} Made Practical}. In
  \bibinfo{booktitle}{\emph{Proceedings of the 2018 {ACM} {SIGSAC} Conference
  on Computer and Communications Security, {CCS} 2018, Toronto, ON, Canada,
  October 15-19, 2018}}, \bibfield{editor}{\bibinfo{person}{David Lie},
  \bibinfo{person}{Mohammad Mannan}, \bibinfo{person}{Michael Backes}, {and}
  \bibinfo{person}{XiaoFeng Wang}} (Eds.). \bibinfo{publisher}{{ACM}},
  \bibinfo{pages}{2028--2041}.
\newblock


\bibitem[Fischer et~al\mbox{.}(1985)]%
        {flp}
\bibfield{author}{\bibinfo{person}{Michael~J. Fischer},
  \bibinfo{person}{Nancy~A. Lynch}, {and} \bibinfo{person}{Mike Paterson}.}
  \bibinfo{year}{1985}\natexlab{}.
\newblock \showarticletitle{Impossibility of Distributed Consensus with One
  Faulty Process}.
\newblock \bibinfo{journal}{\emph{J. {ACM}}} \bibinfo{volume}{32},
  \bibinfo{number}{2} (\bibinfo{year}{1985}), \bibinfo{pages}{374--382}.
\newblock


\bibitem[Ford(2019)]%
        {fordclock}
\bibfield{author}{\bibinfo{person}{Bryan Ford}.}
  \bibinfo{year}{2019}\natexlab{}.
\newblock \showarticletitle{Threshold logical clocks for asynchronous
  distributed coordination and consensus}.
\newblock \bibinfo{journal}{\emph{arXiv preprint arXiv:1907.07010}}
  (\bibinfo{year}{2019}).
\newblock


\bibitem[G{\k{a}}gol et~al\mbox{.}(2019)]%
        {aleph}
\bibfield{author}{\bibinfo{person}{Adam G{\k{a}}gol}, \bibinfo{person}{Damian
  Le{\'s}niak}, \bibinfo{person}{Damian Straszak}, {and}
  \bibinfo{person}{Micha{\l} {\'S}wi{\k{e}}tek}.}
  \bibinfo{year}{2019}\natexlab{}.
\newblock \showarticletitle{Aleph: Efficient Atomic Broadcast in Asynchronous
  Networks with Byzantine Nodes}. In \bibinfo{booktitle}{\emph{Proceedings of
  the 1st ACM Conference on Advances in Financial Technologies}}.
  \bibinfo{pages}{214--228}.
\newblock


\bibitem[Gelashvili et~al\mbox{.}(2022)]%
        {jolteon}
\bibfield{author}{\bibinfo{person}{Rati Gelashvili}, \bibinfo{person}{Lefteris
  Kokoris-Kogias}, \bibinfo{person}{Alberto Sonnino},
  \bibinfo{person}{Alexander Spiegelman}, {and} \bibinfo{person}{Zhuolun
  Xiang}.} \bibinfo{year}{2022}\natexlab{}.
\newblock \showarticletitle{Jolteon and ditto: Network-adaptive efficient
  consensus with asynchronous fallback}. In \bibinfo{booktitle}{\emph{Financial
  Cryptography and Data Security: 26th International Conference, FC 2022,
  Grenada, May 2--6, 2022, Revised Selected Papers}}. Springer,
  \bibinfo{pages}{296--315}.
\newblock


\bibitem[Guerraoui et~al\mbox{.}(2010)]%
        {700bft}
\bibfield{author}{\bibinfo{person}{Rachid Guerraoui}, \bibinfo{person}{Nikola
  Kne{\v{z}}evi{\'c}}, \bibinfo{person}{Vivien Qu{\'e}ma}, {and}
  \bibinfo{person}{Marko Vukoli{\'c}}.} \bibinfo{year}{2010}\natexlab{}.
\newblock \showarticletitle{The next 700 BFT protocols}. In
  \bibinfo{booktitle}{\emph{Proceedings of the 5th European conference on
  Computer systems}}. \bibinfo{pages}{363--376}.
\newblock


\bibitem[Gueta et~al\mbox{.}(2019)]%
        {sbft}
\bibfield{author}{\bibinfo{person}{Guy~Golan Gueta}, \bibinfo{person}{Ittai
  Abraham}, \bibinfo{person}{Shelly Grossman}, \bibinfo{person}{Dahlia Malkhi},
  \bibinfo{person}{Benny Pinkas}, \bibinfo{person}{Michael Reiter},
  \bibinfo{person}{Dragos-Adrian Seredinschi}, \bibinfo{person}{Orr Tamir},
  {and} \bibinfo{person}{Alin Tomescu}.} \bibinfo{year}{2019}\natexlab{}.
\newblock \showarticletitle{Sbft: a scalable and decentralized trust
  infrastructure}. In \bibinfo{booktitle}{\emph{2019 49th Annual IEEE/IFIP
  international conference on dependable systems and networks (DSN)}}. IEEE,
  \bibinfo{pages}{568--580}.
\newblock


\bibitem[Guo et~al\mbox{.}(2020)]%
        {dumboccs}
\bibfield{author}{\bibinfo{person}{Bingyong Guo}, \bibinfo{person}{Zhenliang
  Lu}, \bibinfo{person}{Qiang Tang}, \bibinfo{person}{Jing Xu}, {and}
  \bibinfo{person}{Zhenfeng Zhang}.} \bibinfo{year}{2020}\natexlab{}.
\newblock \showarticletitle{Dumbo: Faster Asynchronous {BFT} Protocols}. In
  \bibinfo{booktitle}{\emph{{CCS} '20: 2020 {ACM} {SIGSAC} Conference on
  Computer and Communications Security, Virtual Event, USA, November 9-13,
  2020}}, \bibfield{editor}{\bibinfo{person}{Jay Ligatti},
  \bibinfo{person}{Xinming Ou}, \bibinfo{person}{Jonathan Katz}, {and}
  \bibinfo{person}{Giovanni Vigna}} (Eds.). \bibinfo{publisher}{{ACM}},
  \bibinfo{pages}{803--818}.
\newblock


\bibitem[Kapitza et~al\mbox{.}(2012)]%
        {kapitza2012cheapbft}
\bibfield{author}{\bibinfo{person}{R{\"u}diger Kapitza},
  \bibinfo{person}{Johannes Behl}, \bibinfo{person}{Christian Cachin},
  \bibinfo{person}{Tobias Distler}, \bibinfo{person}{Simon Kuhnle},
  \bibinfo{person}{Seyed~Vahid Mohammadi}, \bibinfo{person}{Wolfgang
  Schr{\"o}der-Preikschat}, {and} \bibinfo{person}{Klaus Stengel}.}
  \bibinfo{year}{2012}\natexlab{}.
\newblock \showarticletitle{CheapBFT: Resource-efficient Byzantine fault
  tolerance}. In \bibinfo{booktitle}{\emph{Proceedings of the 7th ACM european
  conference on Computer Systems}}. \bibinfo{pages}{295--308}.
\newblock


\bibitem[Keidar et~al\mbox{.}(2021)]%
        {allyouneed}
\bibfield{author}{\bibinfo{person}{Idit Keidar}, \bibinfo{person}{Eleftherios
  Kokoris{-}Kogias}, \bibinfo{person}{Oded Naor}, {and}
  \bibinfo{person}{Alexander Spiegelman}.} \bibinfo{year}{2021}\natexlab{}.
\newblock \showarticletitle{All You Need is {DAG}}. In
  \bibinfo{booktitle}{\emph{Proceedings of the 40th Symposium on Principles of
  Distributed Computing}} (Virtual Event, Italy) \emph{(\bibinfo{series}{PODC
  '21})}. \bibinfo{publisher}{Association for Computing Machinery},
  \bibinfo{address}{New York, NY, USA}.
\newblock


\bibitem[Keidar et~al\mbox{.}(2022)]%
        {keidar2022cordial}
\bibfield{author}{\bibinfo{person}{Idit Keidar}, \bibinfo{person}{Oded Naor},
  {and} \bibinfo{person}{Ehud Shapiro}.} \bibinfo{year}{2022}\natexlab{}.
\newblock \showarticletitle{Cordial Miners: A Family of Simple, Efficient and
  Self-Contained Consensus Protocols for Every Eventuality}.
\newblock \bibinfo{journal}{\emph{arXiv preprint arXiv:2205.09174}}
  (\bibinfo{year}{2022}).
\newblock


\bibitem[Kotla et~al\mbox{.}(2007)]%
        {zyzzyva}
\bibfield{author}{\bibinfo{person}{Ramakrishna Kotla}, \bibinfo{person}{Lorenzo
  Alvisi}, \bibinfo{person}{Mike Dahlin}, \bibinfo{person}{Allen Clement},
  {and} \bibinfo{person}{Edmund Wong}.} \bibinfo{year}{2007}\natexlab{}.
\newblock \showarticletitle{Zyzzyva: speculative byzantine fault tolerance}. In
  \bibinfo{booktitle}{\emph{Proceedings of twenty-first ACM SIGOPS symposium on
  Operating systems principles}}. \bibinfo{pages}{45--58}.
\newblock


\bibitem[Lamport(2019)]%
        {lamport2019part}
\bibfield{author}{\bibinfo{person}{Leslie Lamport}.}
  \bibinfo{year}{2019}\natexlab{}.
\newblock \showarticletitle{The part-time parliament}.
\newblock In \bibinfo{booktitle}{\emph{Concurrency: the Works of Leslie
  Lamport}}. \bibinfo{pages}{277--317}.
\newblock


\bibitem[Lu et~al\mbox{.}(2020)]%
        {dumbopodc}
\bibfield{author}{\bibinfo{person}{Yuan Lu}, \bibinfo{person}{Zhenliang Lu},
  \bibinfo{person}{Qiang Tang}, {and} \bibinfo{person}{Guiling Wang}.}
  \bibinfo{year}{2020}\natexlab{}.
\newblock \showarticletitle{Dumbo-MVBA: Optimal Multi-Valued Validated
  Asynchronous Byzantine Agreement, Revisited}. In
  \bibinfo{booktitle}{\emph{Proceedings of the 39th Symposium on Principles of
  Distributed Computing}} (Virtual Event, Italy) \emph{(\bibinfo{series}{PODC
  '20})}. \bibinfo{publisher}{Association for Computing Machinery},
  \bibinfo{address}{New York, NY, USA}, \bibinfo{pages}{129–138}.
\newblock
\showISBNx{9781450375825}
\urldef\tempurl%
\url{https://doi.org/10.1145/3382734.3405707}
\showDOI{\tempurl}


\bibitem[Miller et~al\mbox{.}(2016)]%
        {honeybadger}
\bibfield{author}{\bibinfo{person}{Andrew Miller}, \bibinfo{person}{Yu Xia},
  \bibinfo{person}{Kyle Croman}, \bibinfo{person}{Elaine Shi}, {and}
  \bibinfo{person}{Dawn Song}.} \bibinfo{year}{2016}\natexlab{}.
\newblock \showarticletitle{The honey badger of BFT protocols}. In
  \bibinfo{booktitle}{\emph{Proceedings of the 2016 ACM SIGSAC Conference on
  Computer and Communications Security}}. \bibinfo{pages}{31--42}.
\newblock


\bibitem[Naor et~al\mbox{.}(2019)]%
        {Lumi}
\bibfield{author}{\bibinfo{person}{Oded Naor}, \bibinfo{person}{Mathieu
  Baudet}, \bibinfo{person}{Dahlia Malkhi}, {and} \bibinfo{person}{Alexander
  Spiegelman}.} \bibinfo{year}{2019}\natexlab{}.
\newblock \showarticletitle{Lumi{\`{e}}re: Byzantine View Synchronization}.
\newblock \bibinfo{journal}{\emph{CoRR}}  \bibinfo{volume}{abs/1909.05204}
  (\bibinfo{year}{2019}).
\newblock
\showeprint[arxiv]{1909.05204}
\urldef\tempurl%
\url{http://arxiv.org/abs/1909.05204}
\showURL{%
\tempurl}


\bibitem[Naor and Keidar(2020)]%
        {naor2020expected}
\bibfield{author}{\bibinfo{person}{Oded Naor} {and} \bibinfo{person}{Idit
  Keidar}.} \bibinfo{year}{2020}\natexlab{}.
\newblock \showarticletitle{Expected linear round synchronization: The missing
  link for linear byzantine smr}.
\newblock \bibinfo{journal}{\emph{arXiv preprint arXiv:2002.07539}}
  (\bibinfo{year}{2020}).
\newblock


\bibitem[Spiegelman et~al\mbox{.}(2022a)]%
        {bullshark}
\bibfield{author}{\bibinfo{person}{Alexander Spiegelman}, \bibinfo{person}{Neil
  Giridharan}, \bibinfo{person}{Alberto Sonnino}, {and}
  \bibinfo{person}{Lefteris Kokoris-Kogias}.} \bibinfo{year}{2022}\natexlab{a}.
\newblock \showarticletitle{Bullshark: Dag bft protocols made practical}. In
  \bibinfo{booktitle}{\emph{Proceedings of the 2022 ACM SIGSAC Conference on
  Computer and Communications Security}}. \bibinfo{pages}{2705--2718}.
\newblock


\bibitem[Spiegelman et~al\mbox{.}(2022b)]%
        {bullsharksync}
\bibfield{author}{\bibinfo{person}{Alexander Spiegelman}, \bibinfo{person}{Neil
  Giridharan}, \bibinfo{person}{Alberto Sonnino}, {and}
  \bibinfo{person}{Lefteris Kokoris-Kogias}.} \bibinfo{year}{2022}\natexlab{b}.
\newblock \showarticletitle{Bullshark: The Partially Synchronous Version}.
\newblock \bibinfo{journal}{\emph{arXiv preprint arXiv:2209.05633}}
  (\bibinfo{year}{2022}).
\newblock


\bibitem[Stathakopoulou et~al\mbox{.}(2019)]%
        {MirBFT}
\bibfield{author}{\bibinfo{person}{Chrysoula Stathakopoulou},
  \bibinfo{person}{Tudor David}, {and} \bibinfo{person}{Marko Vukolic}.}
  \bibinfo{year}{2019}\natexlab{}.
\newblock \showarticletitle{Mir-BFT: High-Throughput {BFT} for Blockchains}.
\newblock \bibinfo{journal}{\emph{CoRR}}  \bibinfo{volume}{abs/1906.05552}
  (\bibinfo{year}{2019}).
\newblock
\showeprint[arxiv]{1906.05552}
\urldef\tempurl%
\url{http://arxiv.org/abs/1906.05552}
\showURL{%
\tempurl}


\bibitem[Team(2020a)]%
        {ava}
\bibfield{author}{\bibinfo{person}{The~Avalanche Team}.}
  \bibinfo{year}{2020}\natexlab{a}.
\newblock \bibinfo{title}{Avalanche Networks}.
\newblock
\newblock
\urldef\tempurl%
\url{https://www.avax.network/}
\showURL{%
\tempurl}
\newblock
\shownote{Accessed: May 2023}.


\bibitem[Team(2022)]%
        {aptos}
\bibfield{author}{\bibinfo{person}{The~Aptos Team}.}
  \bibinfo{year}{2022}\natexlab{}.
\newblock \bibinfo{title}{Aptos Networks}.
\newblock
\newblock
\urldef\tempurl%
\url{https://aptosfoundation.org/}
\showURL{%
\tempurl}
\newblock
\shownote{Accessed: May 2023}.


\bibitem[Team(2020b)]%
        {celo}
\bibfield{author}{\bibinfo{person}{The~Celo Team}.}
  \bibinfo{year}{2020}\natexlab{b}.
\newblock \bibinfo{title}{Celo Networks}.
\newblock
\newblock
\urldef\tempurl%
\url{https://celo.org/}
\showURL{%
\tempurl}
\newblock
\shownote{Accessed: May 2023}.


\bibitem[Team(2018)]%
        {definity}
\bibfield{author}{\bibinfo{person}{The~Definity Team}.}
  \bibinfo{year}{2018}\natexlab{}.
\newblock \bibinfo{title}{Definity Networks}.
\newblock
\newblock
\urldef\tempurl%
\url{https://definity.network/}
\showURL{%
\tempurl}
\newblock
\shownote{Accessed: May 2023}.


\bibitem[Team(2021)]%
        {diembft}
\bibfield{author}{\bibinfo{person}{The~Diem Team}.}
  \bibinfo{year}{2021}\natexlab{}.
\newblock \bibinfo{title}{DiemBFTv4}.
\newblock
\newblock
\urldef\tempurl%
\url{https://developers.diem.com/papers/diem-consensus-state-machine-replication-in-the-diem-blockchain/2021-08-17.pdf}
\showURL{%
\tempurl}
\newblock
\shownote{Accessed: May 2023}.


\bibitem[Team(2020c)]%
        {solana}
\bibfield{author}{\bibinfo{person}{The~Solana Team}.}
  \bibinfo{year}{2020}\natexlab{c}.
\newblock \bibinfo{title}{Solana Networks}.
\newblock
\newblock
\urldef\tempurl%
\url{https://solana.com/}
\showURL{%
\tempurl}
\newblock
\shownote{Accessed: May 2023}.


\bibitem[Team(2023)]%
        {sui}
\bibfield{author}{\bibinfo{person}{The~Sui Team}.}
  \bibinfo{year}{2023}\natexlab{}.
\newblock \bibinfo{title}{Sui Networks}.
\newblock
\newblock
\urldef\tempurl%
\url{https://sui.io/}
\showURL{%
\tempurl}
\newblock
\shownote{Accessed: May 2023}.


\bibitem[Yang et~al\mbox{.}(2022)]%
        {dispersedledger}
\bibfield{author}{\bibinfo{person}{Lei Yang}, \bibinfo{person}{Seo~Jin Park},
  \bibinfo{person}{Mohammad Alizadeh}, \bibinfo{person}{Sreeram Kannan}, {and}
  \bibinfo{person}{David Tse}.} \bibinfo{year}{2022}\natexlab{}.
\newblock \showarticletitle{$\{$DispersedLedger$\}$:$\{$High-Throughput$\}$
  Byzantine Consensus on Variable Bandwidth Networks}. In
  \bibinfo{booktitle}{\emph{19th USENIX Symposium on Networked Systems Design
  and Implementation (NSDI 22)}}. \bibinfo{pages}{493--512}.
\newblock


\bibitem[Yin et~al\mbox{.}(2019)]%
        {hotstuff}
\bibfield{author}{\bibinfo{person}{Maofan Yin}, \bibinfo{person}{Dahlia
  Malkhi}, \bibinfo{person}{Michael~K. Reiter}, \bibinfo{person}{Guy
  Golan{-}Gueta}, {and} \bibinfo{person}{Ittai Abraham}.}
  \bibinfo{year}{2019}\natexlab{}.
\newblock \showarticletitle{HotStuff: {BFT} Consensus with Linearity and
  Responsiveness}. In \bibinfo{booktitle}{\emph{Proceedings of the 2019 {ACM}
  Symposium on Principles of Distributed Computing, {PODC} 2019, Toronto, ON,
  Canada, July 29 - August 2, 2019}}, \bibfield{editor}{\bibinfo{person}{Peter
  Robinson} {and} \bibinfo{person}{Faith Ellen}} (Eds.).
  \bibinfo{publisher}{{ACM}}, \bibinfo{pages}{347--356}.
\newblock


\end{thebibliography}

\appendix
\section{Multiple Anchors per Round}
\label{sec:multi}
With pipelining, \sysname introduces an anchor in every round. 
As a result, in the best case, each anchor requires 2 rounds to commit and while non-anchor vertices require 3 rounds.
Next, we present an approach to further optimize the latency for non-anchor vertices, which relies on retrospectively re-interpreting the DAG structure.

We could envision a protocol in which we iterate over more than one vertex in each round in a deterministic order and treat each vertex as an anchor.
More specifically, for a vertex $v$ in round $r$, we consider executing an instance of the underlying Narwhal-based consensus protocol $\mathcal{P}$ (i.e., DAG-Rider, Tusk, and Bullshark) starting from round $r$ with $v$ being the first anchor.
This involves re-interpreting the existing DAG structure, and potentially letting it evolve, until a decision of whether $v$ is ordered or skipped is locally made.
If $v$ is ordered by $\mathcal{P}$, then the causal history of $v$ followed by $v$ is added to the ordering determined by the new protocol. Otherwise, $v$ is skipped and the protocol proceeds to considering a new instantiation of $\mathcal{P}$ from the next potential anchor (which may be in the same round).

A pseudocode in which all vertices are considered as anchors appears in Algorithm~\ref{alg:bounus}.
\begin{algorithm}
\caption{Every vertex as an Anchor}
\begin{algorithmic}[1]
        \State $r \gets 0$
        \While{\emph{true}}
            \For{\textbf{each} validator $k$}
                \State let $v_{r,k}$ be a vertex by validator $k$ in round $r$
                \State let $F_{r,k}: R \rightarrow A$ be a known to all mapping from 
                \Statex \hspace{9.5mm} rounds to anchors such that $F_{r,k}(r) = k$
                \State execute $\mathcal{P}$, select anchors by $F_{r,k}$, starting from 
                \StateX\hspace{4.5mm} $r$ until the first ordered (not skipped) anchor $A$ 
                \StateX\hspace{4.5mm} is determined.
                \If{$A = v_{r,k}$}
                    \State order $A$'s causal history according to $\mathcal{P}$ 
                \EndIf
            \EndFor
        \State $r \gets r+1$ 

        \EndWhile
    
\end{algorithmic}
\label{alg:bounus} 
\end{algorithm}

In the good case, each vertex that is considered as an anchor can be ordered in 2 rounds. 
However, the drawback of this approach is that if some validators are slow and a potential anchor takes many rounds to decide whether to skip or order, the progress of the whole protocol will be stalled.
This happens because potential anchor vertices must be considered in an agreed-upon and deterministic order. As a result, a vertex that necessitates more rounds incurs a latency penalty for the subsequent vertices.

The above issue can potentially be mitigated by combining it with a leader reputation mechanism to select the vertices that are considered as potential anchors, making the bad case delays less likely.
The other vertices can be ordered based on causal history as previously.

\ifdefined\cameraReady

\fi

\end{document}